\documentclass[11pt]{article}

\usepackage[utf8]{inputenc}
\usepackage[T1]{fontenc}
\usepackage{lmodern}
\usepackage[protrusion=true,expansion=false]{microtype}
\usepackage[margin=1in, top=1.1in, bottom=1.1in]{geometry}

\usepackage{amsmath,amssymb,amsthm}
\usepackage{booktabs}
\usepackage{graphicx}
\usepackage[colorlinks=true,
            linkcolor=blue!70!black,
            citecolor=blue!70!black,
            urlcolor=blue!70!black]{hyperref}
\usepackage{xcolor}
\usepackage{enumitem}
\usepackage[numbers,sort&compress]{natbib}
\usepackage{caption}
\usepackage{tikz}
\usetikzlibrary{shapes.geometric,arrows.meta,positioning,fit,backgrounds}
\usepackage{multirow}
\usepackage{array}

\usepackage{titlesec}
\titleformat{\section}{\large\bfseries}{\thesection}{1em}{}
\titleformat{\subsection}{\normalsize\bfseries}{\thesubsection}{1em}{}

\newtheorem{theorem}{Theorem}[section]
\newtheorem{lemma}[theorem]{Lemma}

\newtheorem{proposition}[theorem]{Proposition}
\newtheorem{definition}{Definition}[section]
\newtheorem{remark}{Remark}[section]

\newcommand{\Azero}{\mathcal{A}_0}
\newcommand{\Dhat}{\widehat{D}}
\newcommand{\HALT}{\texttt{HALT}}
\newcommand{\RESUME}{\texttt{RESUME}}
\newcommand{\DENY}{\texttt{DENY}}
\newcommand{\RECAL}{\texttt{RECALIBRATE}}
\newcommand{\Pset}{\mathcal{P}}
\newcommand{\Es}{E_s}
\newcommand{\Dh}{D_h}
\newcommand{\sigh}{\sigma_h}
\newcommand{\Hi}{H_i}
\newcommand{\ski}{sk_i}
\newcommand{\pki}{pk_i}
\newcommand{\Tstar}{T^*}
\newcommand{\APB}{\mathrm{APB}}

\title{\textbf{Identity-Bound Governance Under Execution Uncertainty}\\[4pt]
       \large An Accountability Proof Block for LLM Agent Persistent Halts,
       with Cryptographic Implementation and Cross-Model Calibration\\[6pt]
       \normalsize Agent Governance Series --- Paper 8}

\author{Marcelo Fernandez\\
        \small TraslaIA\\
        \small \texttt{info@traslaia.com}\\
        \small \texttt{DOI: 10.5281/zenodo.20157139}}
\date{September 2026}

\begin{document}
\maketitle

\begin{center}
\small
\begin{tabular}{clll}
\toprule
\textbf{Paper} & \textbf{Short title} & \textbf{Zenodo DOI} & \textbf{arXiv}\\
\midrule
P0 & Atomic Decision Boundaries~\cite{fernandez2026a} & 10.5281/zenodo.19670649 & arXiv:2604.17511\\
P1 & Agent Control Protocol (ACP)~\cite{fernandez2026acp} & 10.5281/zenodo.19672575 & arXiv:2603.18829\\
P2 & From Admission to Invariants (IML)~\cite{fernandez2026iml} & 10.5281/zenodo.19672589 & arXiv:2604.17517\\
P3/4 & Irreducible Governance Structure~\cite{fernandez2026govstr} & 10.5281/zenodo.19708496 & TBD\\
P5 & Reconstructive Authority Model (RAM)~\cite{fernandez2026ram} & 10.5281/zenodo.19669430 & arXiv:2604.22898\\
P6 & Operationalizing Reconstructive Authority~\cite{fernandez2026opram} & 10.5281/zenodo.19699460 & arXiv:2605.23935\\
P7 & Closing the Execution Gap~\cite{fernandez2026applied} & 10.5281/zenodo.19929771 & TBD\\
P8 & \textbf{Identity-Bound Governance (APB)} & 10.5281/zenodo.20157139 & TBD\\
\bottomrule
\end{tabular}
\end{center}

\begin{abstract}
\noindent
A correctly governed LLM agent can reach a state in which neither continuing
execution nor automatically halting is admissible: the system has detected a
persistent failure of its observability or drift-detection layer, but cannot
itself decide who has the authority to resume, deny, or recalibrate the
deployment. We call this an \emph{identity-bound governance event}, and we
formalise the mechanism that resolves it.

We introduce the \emph{Accountability Proof Block} ($\APB$), a two-part
record consisting of a system-constructed evidence block $\Es$, a
human-supplied decision block $\Dh$, and a cryptographic signature $\sigh$
binding the two to a registered principal $\Hi \in \Pset$. The system can
construct $\Es$ but cannot forge $\sigh$ (it lacks the principal's secret
key); the principal can issue $\Dh$ but cannot alter $\Es$ undetected (the
signature covers both).
We prove four theorems about this construction:
\emph{Protocol-Bounded Governance Completeness} (every halt resolves through
either the recovery loop or a signed APB, within the execution model
induced by DC.1/DC.2), \emph{Non-Repudiability} (no principal can
deny a governance act for which a valid APB exists),
\emph{Impossibility of Anonymous Re-Authorization} (the execution layer
cannot generate a valid APB without the principal's private credential),
and \emph{Finite-Time APB Construction Termination}.
The implementation uses the RFC~8785 JSON Canonicalization Scheme (JCS)
for cross-implementation-safe signatures and extends replay defence
with a UUID4-based semantic-uniqueness predicate (V5).

We provide a real-cryptography implementation in
\texttt{cryptography/ed25519}, expose the result through an Authority
Resolution Function $G$, and validate the construction empirically. With a
MockLLM stack across $10$ seeds and $10{,}000$ steps, governance completeness
holds identically under two threshold policies ($3{,}812$ halt events,
zero unresolved cases). Across $9$ adversarial vectors against
$200$ freshly-signed APBs, the verifier achieves $100.0000\%$ detection.
A multi-principal threshold experiment across $7{,}000$ events confirms
Byzantine principal resistance: $0$ false acceptances in $2{,}000$
single-key capture attempts under $k = 2$.
A cross-model study of six open LLMs (\texttt{mistral}, \texttt{deepseek-r1},
\texttt{gemma}, \texttt{gpt-oss}, \texttt{qwen}, \texttt{llama3}) finds
that the model-specific drift threshold $\Tstar$ is stable within model
($\sigma/\Tstar < 2\%$ for all measurable cases) but varies across models
by a factor of $1.7\times$, refuting any size-monotone hypothesis: the
largest model in our sample is the one that does not drift within the
$500$-step window. A temperature sweep across three of those models at
$T \in \{0.2, 0.4, 0.6, 0.8\}$ replicates the temperature-insensitivity
conjecture for the two fast-drifting models (per-model range $\le 5$
steps) but reveals a \emph{drift-floor effect} for the slow-drifter
($T^*$ range of $51$ steps for \texttt{gemma4}). We conclude that
$\Tstar$ must be \emph{measured} per deployment --- not predicted from
architecture, and not assumed temperature-invariant for slow-drifting
models --- and that the APB is the operational vehicle by which that
measured threshold yields accountable authority transfer.
\end{abstract}

\section{Introduction}
\label{sec:intro}

P7 of this series~\citep{fernandez2026applied} established that persistent
halts are not implementation failures but structural consequences of bounded
runtime governance: when drift accumulates past the recovery authority of
the autonomous stack, execution stops and accountability remains unresolved.
P8 answers the question that structural halt leaves open --- \emph{who has
the authority to lift it, under what evidence, and with what
non-repudiable record?}

Concretely: once a halt is persistent in the formal sense
(\S\ref{sec:framework}), the system cannot issue a unilateral resumption
without recalibrating its own authority, which design constraints DC.1
and DC.2 of~\citep{fernandez2026applied} explicitly forbid.
This paper introduces the \emph{Accountability Proof Block} ($\APB$) as
the minimal mechanism for binding that authority to a verifiable human
decision.

We refer to this as the problem of \emph{identity-bound governance under
execution uncertainty}.

The standard accountability mechanisms in deployed AI systems are not a
solution to this problem. Documentation artefacts such as model cards
\citep{mitchell2019model} and OpenAI's system cards
\citep{openai2024systemcard} are static; they describe behaviour at training
or deployment time, not at the moment of an unexpected halt. Alignment
techniques such as constitutional AI \citep{bai2022constitutional} bias the
agent's behaviour but do not provide a non-repudiable record of the
human decisions that surround a deployment. Compliance audit trails as
mandated by GDPR, SOX, or PCI-DSS are typically post-hoc and not
cryptographically bound to specific runtime events. None of these
mechanisms provides a per-event, identity-bound, cryptographically
verifiable record that the system itself cannot forge.

\paragraph{Contributions.}
This paper introduces the \emph{Accountability Proof Block} ($\APB$) as
the formal object that fills this role, and validates it empirically.
Specifically:

\begin{enumerate}[leftmargin=2em, topsep=2pt, itemsep=2pt]
  \item We define the Principal Set $\Pset$, the Authority Resolution
        Function $G$, and the APB structure
        $\APB = (\Es, \Dh, \sigh)$ formally
        (\S\ref{sec:framework}), and we make the threat model explicit:
        five attacker classes, with the boundary between what the APB
        defends against and what it does not made precise.
  \item We prove four theorems (\S\ref{sec:theorems}):
        \emph{Protocol-Bounded Governance Completeness} (T8.1),
        \emph{Non-Repudiability} (T8.2),
        \emph{Impossibility of Anonymous Re-Authorization} (T8.3),
        and \emph{Finite-Time APB Construction Termination} (T8.4).
        Each is supported by auxiliary lemmas that decompose the
        argument.
  \item We provide a real-cryptography implementation
        (\S\ref{sec:impl}) using $\text{ed25519}$ signatures over the
        RFC~8785 JSON Canonicalization Scheme (JCS), with a Principal
        Registry, a five-predicate verifier (V1--V5, including a
        semantic-uniqueness check via UUID4 event identifiers), and an
        append-only log with HMAC chaining for persistence.
  \item We validate the construction empirically through five
        experiments (\S\ref{sec:exp_a}--\S\ref{sec:exp_e}):
        Governance Completeness over $10$ seeds and two threshold
        policies; APB Integrity under nine adversarial vectors;
        cross-model $\Tstar$ across six open LLM families;
        temperature insensitivity across three of those families
        and four sampling temperatures; and multi-principal
        threshold governance with Byzantine resistance validation
        across $7{,}000$ events covering seven attack scenarios.
  \item We extract \emph{Proposition~5.1} (\S\ref{sec:prop_calibration})
        as an operational design recipe: given a measured $\Tstar$
        for a model, the persistence window $\Delta T$ that the
        execution layer uses to classify a halt as a governance event
        admits a calibrated range. We discuss what this means for
        deployments where $\Tstar$ is not finite (one of our six
        models did not drift within the experimental window).
\end{enumerate}

\paragraph{Independence from prior work.}
Although this paper sits in the \emph{Agent Governance} series, its claims
do not depend on the empirical results of prior papers. The formal
framework, the theorems, and all four experiments are self-contained:
fresh evidence is generated under identical software but different runs,
and the cryptographic claims are entirely new (no cryptography appears
in P0--P7). Where prior work is referenced (notably the design constraints
DC.1 and DC.2 of \citep{fernandez2026applied}), the references are
contextual rather than constitutive.

\paragraph{Roadmap.}
\S\ref{sec:related} situates the APB against existing accountability
mechanisms. \S\ref{sec:framework} introduces the Principal Set, the
Authority Resolution Function, and the APB structure. \S\ref{sec:theorems}
proves the four theorems. \S\ref{sec:impl} describes the implementation
and the threat model boundary. \S\ref{sec:exp_a}--\S\ref{sec:exp_e}
report the five validation experiments. \S\ref{sec:discussion} discusses
limitations and integration paths. \S\ref{sec:conclusion} concludes.

\section{Related Work}
\label{sec:related}

We organise prior work along the dimensions that the APB engages
simultaneously: \emph{runtime}, \emph{cryptographic},
\emph{identity-bound}, and \emph{LLM-agent specific}. To our knowledge no
existing mechanism combines all four; Table~\ref{tab:related} maps the
landscape.

\begin{table}[t]
\centering
\caption{Accountability mechanisms relevant to LLM agent governance,
characterised by whether they operate at runtime, are cryptographic,
bind to a specific human identity, and are designed for the LLM-agent
setting. The APB is, to our knowledge, the first construction that
satisfies all four.}
\label{tab:related}
\small
\begin{tabular}{lcccc}
\toprule
Mechanism & Runtime & Crypto & Identity-bound & LLM-agent specific \\
\midrule
Model / system cards \citep{mitchell2019model,openai2024systemcard}
        & --- & --- & --- & \checkmark \\
Constitutional AI \citep{bai2022constitutional}
        & \checkmark & --- & --- & \checkmark \\
Compliance audit trails (GDPR, SOX)
        & \checkmark & --- & --- & --- \\
Smart-contract logs \citep{wood2014ethereum}
        & \checkmark & \checkmark & \checkmark & --- \\
Threshold signatures \citep{boneh2018compact}
        & --- & \checkmark & \checkmark & --- \\
\textbf{APB (this work)}
        & \checkmark & \checkmark & \checkmark & \checkmark \\
\bottomrule
\end{tabular}
\end{table}

\paragraph{Documentation-based.}
Model cards \citep{mitchell2019model}, system cards
\citep{openai2024systemcard}, and data sheets are static artefacts
authored prior to deployment. They describe what an agent is supposed
to do or not do; they do not record what a specific human authorised
the agent to resume after a halt at $t = t_e$.

\paragraph{Alignment-based.}
Constitutional AI \citep{bai2022constitutional} and RLHF approaches
shape the agent's behaviour at training time. They affect \emph{how}
the agent acts but not \emph{who} authorised a recovery decision.
The APB is orthogonal: it does not constrain the agent's behaviour;
it constrains the act of re-authorising the agent.

\paragraph{Audit-trail and compliance.}
Regulatory audit trails (GDPR Article 30, SOX section 404, PCI-DSS)
log access events but typically lack cryptographic binding to a
specific human credential at the moment of decision. Recent work on
verifiable audit logs \citep{transparencylog2014} brings cryptographic
guarantees to logs themselves, but does not address per-event
authorisation by named principals.

\paragraph{Smart contracts and on-chain governance.}
Smart-contract execution \citep{wood2014ethereum} and on-chain DAO
governance produce records that are runtime, cryptographic, and
identity-bound. They are, however, not designed for LLM-agent runtime
state: they bind to wallet addresses (which are not the same as
human identities), assume a public ledger, and are not intended to
operate at the latency of an in-process governance event. The APB
borrows the cryptographic-binding insight while keeping the
deployment local.

\paragraph{Threshold and multi-signature schemes.}
Threshold and multi-signature constructions
\citep{boneh2018compact,gennaro2018fast} address \emph{which}
principals must concur before an action is authorised. They
complement the APB rather than replace it: \S\ref{sec:framework}
notes that an APB-typed governance event may require $k$-of-$n$
signatures, in which case each cosigner contributes their own
$\sigh$ over the same canonical $(\Es \,\Vert\, \Dh)$ payload.

\paragraph{Position.}
The APB is, to our knowledge, the first construction that is
simultaneously runtime, cryptographic, identity-bound, and designed
for LLM-agent governance. The closest analogues are smart-contract
logs (which lack the LLM-agent fit) and threshold signatures (which
are a sub-component, not a complete governance mechanism).

\section{Formal Framework}
\label{sec:framework}

\subsection{Setting and Inherited Constraints}
\label{sec:setting}

We assume an agent system whose execution layer maintains a runtime
admission snapshot $\Azero$, a drift estimator $\Dhat(t)$ with
threshold $\theta$, and a halt mechanism that may emit
$\HALT$ when its observability or risk-attribution gates fail.
Two design constraints from prior work in the series
\citep{fernandez2026applied} are inherited verbatim:
\begin{description}[leftmargin=2em, topsep=2pt]
  \item[DC.1 (Separation of Recalibration Authority).]
        For all $t$, $\Azero(t) = \Azero(0)$. The execution layer
        is forbidden from modifying its own admission baseline at
        runtime.
  \item[DC.2 (Drift Event Classification).]
        A halt is \emph{transient} if $\Dhat(\tau) \geq \theta$ for
        $\tau$ within a bounded window $\Delta T$ and returns below
        $\theta$ thereafter; it is \emph{persistent} if
        $\Dhat(t) \geq \theta$ for all $t \in [t_0, t_0 + \Delta T]$.
        Transient halts route to the recovery loop; persistent halts
        constitute \emph{governance events}.
\end{description}

DC.1 implies that, when a persistent halt occurs, the system cannot
recalibrate its own $\Azero$ to clear the halt. DC.2 routes such halts
out of the execution layer's authority. The role of this paper is to
specify what is on the other side of that boundary.

\subsection{Principal Set $\Pset$}
\label{sec:principals}

\begin{definition}[Principal Set]
\label{def:principals}
The \emph{Principal Set} of a governed system is a finite set
$\Pset = \{H_1, \ldots, H_n\}$ of identity-bound human principals.
Each $\Hi$ has a public verification key $\pki$ and an associated
private signing key $\ski$ held exclusively by $\Hi$. The mapping
$H_i \mapsto \pki$ is recorded in a registry that the execution layer
treats as read-only.
\end{definition}

DC.1 specialised to $\Pset$: the registry is fixed at $\Azero$
construction time, and any modification to it is itself a governance
event of type $\RECAL$. We thus obtain a self-referential modification
discipline --- $\Pset$ is mutable only through APBs that authorise
its mutation.

\paragraph{Revocation.}
A principal may be revoked. We model revocation by recording a
timestamp of revocation; APBs signed before the revocation timestamp
remain verifiable (with a flag indicating the principal's later
revoked status), while APBs signed after are rejected at verification
time. This historical-validity discipline is enforced by the
predicate $\mathrm{is\_active}(\Hi, t_e)$ used by the verifier
(\S\ref{sec:impl}).

\subsection{Authority Resolution Function $G$}
\label{sec:G}

\begin{definition}[Authority Resolution Function]
\label{def:G}
$G : \Pset \times \mathcal{S} \times \mathcal{E}
   \to \{\RESUME, \DENY, \RECAL\}$,
where $\mathcal{S}$ denotes the space of system runtime states and
$\mathcal{E}$ the space of governance evidence packages. $G$ is the
governance-layer analog of the runtime authority function $F$ of
\citep{fernandez2026applied}: it constructs authority from human
attestation over a defined evidence set, not from sensor state.
The output $\RECAL$ is the only authorised path by which $\Azero$
may change after initialisation; $\RESUME$ permits a single
resumption under stated scope; $\DENY$ permanently bars the queried
action.
\end{definition}

The APB is the recorded artefact that witnesses an evaluation of
$G$.

\subsection{Accountability Proof Block (APB)}
\label{sec:apb}

\begin{definition}[APB]
\label{def:apb}
An \emph{Accountability Proof Block} signed by principal $\Hi$ is a
triple
\[
  \APB \;=\; (\Es,\;\Dh,\;\sigh)
\]
where
\begin{align*}
  \Es &= (\mathrm{hash}(\Azero),\; \Dhat(t_e),\; t_e,\;
         \mathrm{eid},\;
         \mathrm{hash}(\mathrm{trace}_{\le t_e}),\; \mathrm{cause})
         \quad\text{(System Evidence Block)}\\
  \Dh &= (\Hi,\;\mathrm{decision},\;\mathrm{rationale},\;\mathrm{scope})
         \quad\text{(Human Decision Block)}\\
  \sigh &= \mathrm{Sign}_{\ski}\!\bigl(\,
            \mathrm{canon}(\Es) \,\Vert\, \mathrm{canon}(\Dh)
            \,\bigr)
         \quad\text{(Signature)}
\end{align*}
$\mathrm{eid}$ is a UUID version~4 generated once at evidence-construction
time; it provides semantic uniqueness across governance events independently
of $t_e$ (enabling V5, \S\ref{sec:impl}).
$\mathrm{canon}(\cdot)$ denotes the RFC~8785 JSON Canonicalization Scheme
(JCS)~\citep{rfc8785}: keys sorted by Unicode codepoint, IEEE~754
number representation, no insignificant whitespace, UTF-8 encoding.
This cross-implementation-safe encoding is the deterministic
input to the signature, removing any ambiguity that could otherwise
weaken the non-repudiability and tamper-detection guarantees.
$\mathrm{Sign}$ is the ed25519 signature operation.
\end{definition}

The construction has three structural properties enforced by its
typing alone:

\begin{enumerate}[leftmargin=2em, topsep=2pt]
  \item Only the execution layer can correctly produce $\Es$: it is the
        only party that holds $\Azero$, $\Dhat$, and the trace.
  \item Only $\Hi$ can produce $\sigh$: it is the only party that holds
        $\ski$.
  \item $\sigh$ covers both $\Es$ and $\Dh$: any modification to either
        invalidates the signature.
\end{enumerate}

\subsection{Threat Model}
\label{sec:threat}

The APB is designed to defeat a specific class of attackers; we make
the boundary of its protection explicit. Five actor classes are
considered.

\begin{description}[leftmargin=2em, topsep=2pt]
  \item[A1. External attacker.] No system access. Goal: forge an APB
        attributed to $\Hi$. Defence: T8.3 (\S\ref{sec:theorems}).
  \item[A2. Compromised system process.] Has runtime access to the
        execution layer but does not hold $\ski$ for any
        $\Hi \in \Pset$. Goal: synthesise a $\sigh$ that verifies
        against some registered $\pki$. Defence: T8.3.
  \item[A3. Replay attacker.] Holds a previously-valid APB and tries
        to use it in a new context. Defence: two complementary predicates
        in the verifier (\S\ref{sec:impl}): V4 (temporal freshness, rejects
        APBs outside the acceptable age window) and V5 (semantic
        uniqueness, rejects any APB whose $\mathrm{eid}$ was already
        accepted, even within the V4 window).
  \item[A4. Man-in-the-middle.] Intercepts an APB in transit and tries
        to modify either $\Es$ or $\Dh$ before delivery. Defence:
        $\sigh$ covers both blocks, so any tampering invalidates
        verification.
  \item[A5. Compromised principal.] Has obtained $\ski$ for some
        $\Hi$. \emph{Out of scope.} The APB cryptographically binds
        decisions to keys; if a key is compromised, the binding
        attaches to the compromised key. Mitigation lies outside
        the APB protocol (multi-signature, key rotation,
        out-of-band review).
\end{description}

\subsubsection{What APB Does Not Prevent}
The APB is an \emph{accountability} mechanism, not an
\emph{access-control} mechanism. A principal $\Hi$ holding a valid
$\ski$ can sign any $\Dh$ they wish, including a malicious one, and
the resulting APB will verify cryptographically. What the APB
ensures, against this case, is that:
\begin{enumerate}[leftmargin=2em, topsep=2pt, itemsep=1pt]
  \item the malicious decision is attributable to $\Hi$ and cannot
        be denied (T8.2);
  \item the system itself did not synthesise the decision behind
        $\Hi$'s back (T8.3).
\end{enumerate}
Mitigations for adversarial principals (multi-signature, separation
of duties, out-of-band review) are configurable on top of the APB
but are not part of the APB protocol itself. We return to this
distinction in the discussion (\S\ref{sec:discussion}).

\subsection{Multi-Principal APB Extension}
\label{sec:multi_principal}

The base APB carries a single signature $\sigh$ from one principal.
For deployments that require stronger governance guarantees, we extend
the construction to support \emph{threshold signing}: a Multi-Principal
APB requires $k$ valid, distinct-principal signatures out of $n$
registered principals ($k$-of-$n$).

\begin{definition}[Multi-Principal APB]
\label{def:multi_apb}
A \emph{Multi-Principal Accountability Proof Block} with threshold $k$
is a quadruple
\[
  \APB^{(k)} \;=\; (\Es,\;\Dh,\;\Sigma,\;k)
\]
where $\Es$ and $\Dh$ are as in Definition~\ref{def:apb}, and
$\Sigma = \{(H^1, \sigma^1), \ldots, (H^m, \sigma^m)\}$ is an
ordered set of signature entries ($H^j \in \Pset$, $\sigma^j \in
\{0,1\}^{512}$), each satisfying
$\sigma^j = \mathrm{Sign}_{sk^j}(\mathrm{canon}(\Es)
\,\Vert\, \mathrm{canon}(\Dh))$.
The APB is \emph{valid} under threshold $k$
iff at least $k$ entries in $\Sigma$ pass predicates V1--V3 and
V5 over distinct principals $H^j \in \Pset$.
\end{definition}

Three structural properties hold by construction:
\begin{enumerate}[leftmargin=2em, topsep=2pt, itemsep=2pt]
  \item Each signer attests independently to the same $(\Es, \Dh)$
        pair; no coordination between signers' devices is required.
  \item Duplicate signer entries (same $\Hi$ appearing twice) are
        rejected: each distinct principal counts at most once toward
        the threshold.
  \item The signed message is identical to the single-principal APB,
        so the non-repudiability and anonymous-re-authorization
        impossibility guarantees (T8.2, T8.3) apply to each entry
        in $\Sigma$ individually.
\end{enumerate}

The $k$-of-$n$ structure is the formal basis for Byzantine principal
resistance (Proposition~\ref{prop:byzantine} below): governance
capture under $\APB^{(k)}$ requires the adversary to hold $k$
distinct signing keys simultaneously.

\section{Theorems}
\label{sec:theorems}

We prove four theorems about the APB. The first three are
properties promised in the framework section of
\citep{fernandez2026applied}; we restate them precisely and give full
proofs. The fourth (T8.4) is new: it establishes that the construction
of $\Es$ terminates in bounded time, which is required for T8.1 to be
non-vacuous.

\subsection{T8.1 --- Protocol-Bounded Governance Completeness}
\label{sec:t81}

\begin{theorem}[Protocol-Bounded Governance Completeness]
\label{thm:completeness}
Let $h$ be a $\HALT$ event emitted by an execution layer satisfying
DC.1 and DC.2. Then, within the execution model induced by DC.1, DC.2,
and the APB verifier predicates, exactly one of the following
resolutions holds:
\begin{enumerate}[leftmargin=2em, topsep=2pt, itemsep=1pt]
  \item[(R1)] $h$ is transient under DC.2 and the recovery loop
        produces a $\RESUME$ within its bounded attempt budget;
  \item[(R2)] $h$ is persistent under DC.2 and a valid signed
        $\APB$ is produced through an evaluation of $G$.
\end{enumerate}
No third resolution path exists within the protocol boundary.
\end{theorem}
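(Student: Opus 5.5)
The argument is a case analysis on the DC.2 classification of $h$, followed by an exhaustiveness argument for the protocol's transition structure. First I will pin down the execution model. Within the protocol boundary, the only transitions that can move a halted system back to execution or to a terminal state are the recovery loop's $\RESUME$, and the outputs $\RESUME$, $\DENY$, $\RECAL$ of $G$. By Definition~\ref{def:G} and the surrounding text, $G$ is evaluated only when its result is witnessed by an $\APB$ accepted by the verifier.

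Step one is to show that DC.2 partitions halts. For a given trajectory $\Dhat(\cdot)$ on $[t_0, t_0+\Delta T]$, either $\Dhat(t)\ge\theta$ on the whole window (persistent) or not (transient). I will read DC.2's two clauses as complementary: any halt failing the persistence predicate is routed as transient. This gives the ``exactly one'' part, because (R1) requires transience, (R2) requires persistence, and the two predicates are mutually exclusive.

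Step two treats the transient case. DC.2 routes $h$ to the recovery loop, whose attempt budget is bounded, so the loop terminates. I must argue that a transient halt ends in $\RESUME$, and the argument is this: if the budget were exhausted with $\Dhat$ still above $\theta$, then $\Dhat$ stayed above $\theta$ over the window, so $h$ would be persistent, contradicting transience. This requires calibrating the budget to $\Delta T$, and I will state that alignment explicitly as part of ``the execution model induced by DC.1/DC.2.''

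Step three treats the persistent case. DC.1 forbids the execution layer from modifying $\Azero$, so it cannot clear the halt itself, and DC.2 removes $h$ from the recovery loop's authority. The only remaining transition is an evaluation of $G$. By definition, that evaluation requires a principal $\Hi\in\Pset$ to sign $\canon(\Es)\Vert\canon(\Dh)$. T8.4 guarantees that $\Es$ is constructed in finite time, so the evidence block exists. The output is therefore a valid $\APB$, which gives (R2).

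Step four, the no-third-path claim, is the main obstacle, since it is really a closure property of the protocol rather than a computation. My approach is to enumerate every state-changing operation available to each actor and show that each one falls into (R1), falls into (R2), or is excluded. The execution layer's options are blocked by DC.1. Forging a signature is excluded by T8.3, with T8.2 available for attribution. Replaying an earlier $\APB$ is excluded by verifier predicates V4 and V5. Altering the registry $\Pset$ is itself a $\RECAL$ governance event and therefore a case of (R2). I will flag that compromised principals (A5) lie outside the theorem's scope; such a principal still produces a valid $\APB$, so the trichotomy is not violated.

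One subtlety remains: a persistent halt for which no principal ever signs. I will handle it by interpreting (R2) as the unique admissible resolution path, so that the system remains halted until an $\APB$ exists, rather than as a claim that a signature is always eventually produced.
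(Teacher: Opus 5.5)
The gap is in your Step two, where you try to show that a transient halt must end in $\RESUME$. Your contradiction rules out only one failure mode: the budget running out with $\Dhat \ge \theta$ throughout $[t_0, t_0+\Delta T]$. But the recovery loop exits with $\RESUME$, $\HALT$, or $\DENY$ \emph{on the augmented state}, and its $\RESUME$ requires that observability be restored. The setting also has halts emitted by observability and risk-attribution gates, not only by drift.

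So the loop can return $\HALT$ or $\DENY$ while $\Dhat$ is already back below $\theta$, or before the budget is spent. Even at budget exhaustion, $\Dhat \ge \theta$ at the times the loop checks does not give $\Dhat \ge \theta$ on all of $[t_0, t_0+\Delta T]$, because the trajectory may have dipped below $\theta$ in between. In each of these cases $h$ is transient under your complementary reading of DC.2, so (R1) fails. Your case split never routes a transient $h$ to (R2), which leaves exactly the third outcome the theorem excludes. Aligning the budget length with $\Delta T$ repairs none of this. What your argument actually needs is the hypothesis that the loop resumes exactly on transient halts, and then Step two is assumed rather than proved.

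The paper's proof avoids this by never claiming that transience forces $\RESUME$. If the loop exits with $\HALT$ or $\DENY$, the halt is \emph{reclassified as persistent} and handed to $G$. This is Lemma~\ref{lem:exhaustivity}: the loop's exit set partitions post-halt states into $\RESUME$ and $\HALT \cup \DENY$, and the latter is the domain of $G$. Once failed recoveries are routed into (R2), exhaustivity needs no budget calibration, and ``no third path'' follows directly from the exhaustive, disjoint split.

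Your Step four closure over adversary actions is then unnecessary. It also makes T8.1 depend on ed25519 hardness via Theorem~\ref{thm:anonimp}, so it holds only up to negligible probability. The paper instead treats T8.1 as a pure partition argument independent of the signature scheme. Similarly, you weaken (R2) to ``the only admissible path'' for the case where nobody signs. The paper keeps (R2) literal: it takes as given a principal with a valid key, and places unreachable signing devices explicitly outside the protocol boundary.
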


\begin{proof}
By DC.2, $h$ is exactly one of \emph{transient} or \emph{persistent}.
Case (transient): the recovery loop is defined for transient halts
(\citep{fernandez2026applied}, P6 Recovery Loop) and terminates in at
most $k$ attempts with a $\RESUME$, $\HALT$, or $\DENY$ outcome on the
augmented state. If the outcome is $\RESUME$, we are in case (R1).
If the outcome on the augmented state is $\HALT$ or $\DENY$, the halt
is reclassified as persistent (the recovery loop has exhausted the
ways in which the halt could be transient), reducing to the persistent
case below.
Case (persistent): by DC.1 the execution layer cannot modify $\Azero$
to clear the halt; the only authorised path to a state in which
execution may continue is an output of $G$. By Theorem~\ref{thm:t84}
below, $\Es$ can be constructed in finite time. By
Definition~\ref{def:apb}, given $\Es$ and a principal $\Hi$ with valid
$\ski$, an APB is produced. We are in case (R2).
The two cases are exhaustive (DC.2) and disjoint (the recovery loop's
output of $\RESUME$ excludes persistence within the same window).
Therefore no third resolution exists within the protocol boundary.

\emph{Scope of the claim.}
This completeness result is relative to the execution model induced by
DC.1, DC.2, and the APB verifier predicates V1--V5. It does not assert
completeness under arbitrary infrastructure failures, compromised
principal sets, or events external to the protocol boundary
(e.g.\ a fully inoperative registry or unreachable signing device).
Those failure modes are addressed by the composability layer discussed
in \S\ref{sec:composability}.
\end{proof}

\begin{lemma}[Resolution Path Exhaustivity]
\label{lem:exhaustivity}
The recovery loop's exit set $\{\RESUME, \HALT, \DENY\}$ partitions
the post-halt state space into transient ($\RESUME$) and
persistent ($\HALT \cup \DENY$). The latter is exactly the
domain of $G$.
Note that $\HALT$ and $\DENY$ are semantically distinct:
$\HALT$ denotes an unresolved system state (the loop exhausted its
authority without a decision), while $\DENY$ denotes a resolved human
decision to refuse resumption. Both share the property that the
execution layer cannot self-resolve them without external authority,
which is the criterion relevant here.
\end{lemma}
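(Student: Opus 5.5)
The plan is to prove Lemma~\ref{lem:exhaustivity} in three parts: totality and disjointness of the exit map, identification of the persistent class with the domain of $G$, and the semantic separation of $\HALT$ and $\DENY$. The starting point is that the recovery loop, as used in the proof of Theorem~\ref{thm:completeness}, is a terminating procedure with a bounded attempt budget $k$. So it induces a total function $\rho$ from post-halt states $s$ to the exit set $\{\RESUME, \HALT, \DENY\}$. Totality gives coverage: every post-halt state lies in some preimage $\rho^{-1}(x)$. Single-valuedness gives pairwise disjointness of the three preimages. Hence $\{\rho^{-1}(\RESUME),\, \rho^{-1}(\HALT)\cup\rho^{-1}(\DENY)\}$ is a partition, modulo empty blocks, which I will allow explicitly in the definition of partition.

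Next I will show that this partition coincides with the DC.2 classification. One direction: $\rho(s)=\RESUME$ means $\Dhat$ returned below $\theta$ within the recovery window, which is precisely the transient clause of DC.2. The other direction: $\rho(s)\in\{\HALT,\DENY\}$ means the loop exhausted its budget without clearing the halt. By the reclassification convention in the proof of Theorem~\ref{thm:completeness}, such a state is persistent. Conversely, a state that is persistent under DC.2 satisfies $\Dhat(t)\ge\theta$ on $[t_0,t_0+\Delta T]$, so the loop cannot emit $\RESUME$ within that window. This converse is the same disjointness argument used at the end of the theorem's proof.

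For ``exactly the domain of $G$'', I will use DC.1: the execution layer cannot alter $\Azero$, so no internal transition leaves a state in $\rho^{-1}(\HALT)\cup\rho^{-1}(\DENY)$. By Definition~\ref{def:G}, the only authorised continuation is an output of $G$. Conversely, DC.2 routes only persistent halts to governance, so $G$ is never invoked on transient states. The remark distinguishing $\HALT$ from $\DENY$ needs no argument beyond the definitions. Both lie in the same block because each lacks an internal path to continuation, and that shared property is all the partition uses.

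I expect the main obstacle to be the equivalence between ``loop exits with $\HALT$ or $\DENY$'' and ``persistent under DC.2''. DC.2 is stated in terms of the drift trajectory, whereas the exit set is stated in terms of loop outcomes, and the two windows (the attempt budget versus $\Delta T$) need not match. My first approach is to adopt the theorem's reclassification convention as definitional, so that persistence means either satisfying DC.2 or exhausting the recovery loop. If the equivalence is instead needed with the literal DC.2 definition, I will assume the loop's budget spans at least $\Delta T$. I will state that assumption explicitly as part of the protocol boundary.
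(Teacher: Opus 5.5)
Your proposal is correct and follows essentially the same route as the paper's proof. That proof argues only three things: the loop's outputs are mutually exclusive, \RESUME{} certifies transience via the P6 observability check, and \HALT{} or \DENY{} certifies persistence because the loop ``has confirmed it cannot resolve''. The last step is exactly the reclassification convention you adopt as definitional. Your additions (totality from the bounded attempt budget, the converse direction, the DC.1/Definition~\ref{def:G} argument for ``exactly the domain of $G$'', and the explicit alignment of the attempt budget with $\Delta T$) state assumptions the paper leaves implicit rather than changing the argument.
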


\begin{proof}
The recovery loop's outputs are mutually exclusive by construction
(it returns at most one). Definition of transient (DC.2): the drift
estimator returned below threshold within $\Delta T$, observable by
the recovery loop's own measurements. If $\RESUME$, by P6 the loop
has confirmed observability is restored, hence transient. If $\HALT$
or $\DENY$, the loop has confirmed it cannot resolve, hence
persistent.
\end{proof}

\subsection{T8.2 --- Non-Repudiability}
\label{sec:t82}

\begin{theorem}[Non-Repudiability]
\label{thm:nonrep}
Let $\APB = (\Es, \Dh, \sigh)$ be a verifiable APB whose $\Hi$ field
in $\Dh$ identifies a registered, non-revoked principal at $t_e$, and
let $\sigh$ verify against $\pki$. Then no efficient adversary can
produce a $\Dh' \neq \Dh$ on $\Es$ that also verifies under $\pki$,
except with negligible probability.
\end{theorem}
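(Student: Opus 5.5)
The plan is a standard reduction to the existential unforgeability under chosen-message attack (EUF-CMA) of ed25519. To make the reduction meaningful, I first need one fact about the encoding: the map $(\Es,\Dh)\mapsto \mathrm{canon}(\Es)\,\Vert\,\mathrm{canon}(\Dh)$ is injective.

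\textbf{Step 1: injectivity of the signed message.} This comes from two properties of RFC~8785 JCS. First, $\mathrm{canon}$ is deterministic and injective on JSON values: distinct values yield distinct canonical strings. Second, each canonical object is a complete, brace-delimited JSON text, and therefore prefix-free. So the split point in the concatenation is uniquely recoverable, and $m=\mathrm{canon}(\Es)\Vert\mathrm{canon}(\Dh)$ determines $(\Es,\Dh)$. Consequently, for fixed $\Es$, any $\Dh'\neq\Dh$ gives a message $m'\neq m$. I expect this to be the step that needs care. A naive concatenation of two strings is not injective in general, so I would make the prefix-freeness argument explicit rather than assume it. If that argument is delicate, the fallback is to note that the implementation can sign $\mathrm{canon}([\Es,\Dh])$ instead, which is trivially injective.

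\textbf{Step 2: the reduction.} Suppose an efficient adversary $\mathcal{A}$ outputs $(\Dh',\sigma')$ with $\Dh'\neq\Dh$ and $\mathrm{Verify}_{\pki}(\mathrm{canon}(\Es)\Vert\mathrm{canon}(\Dh'),\sigma')=1$ with non-negligible probability $\varepsilon$. I build a forger $\mathcal{B}$ against ed25519 as follows.
\begin{enumerate}[leftmargin=2em, topsep=2pt, itemsep=1pt]
  \item $\mathcal{B}$ receives $\pki$ and installs it in the registry for $\Hi$.
  \item $\mathcal{B}$ answers every signing request $\mathcal{A}$ makes, including the one producing the honest $\sigh$ on $m$, through its own signing oracle.
  \item $\mathcal{B}$ constructs $\Es$ itself. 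This matches item 1 of the structural properties, since the execution layer constructs $\Es$ without needing any key.
  \item $\mathcal{B}$ outputs $(m',\sigma')$.
\end{enumerate}

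\textbf{Step 3: the forgery is fresh.} By Step 1, $m'\neq m$. I must also rule out $m'$ equal to some other queried message $m_j=\mathrm{canon}(E_s^{(j)})\Vert\mathrm{canon}(D_h^{(j)})$. Here the theorem's requirement that $\Dh'$ sit on the given $\Es$ helps. Because each $E_s^{(j)}$ carries a fresh UUID4 $\mathrm{eid}$, the queried messages carry distinct evidence blocks except with probability at most $q^2/2^{122}$ (birthday bound over $q$ queries). If $\Hi$ signed only one decision per $\mathrm{eid}$, injectivity again forces $m'\notin\{m_j\}$. Hence $\mathcal{B}$ wins EUF-CMA with probability $\varepsilon - q^2/2^{122}$, which is non-negligible and contradicts ed25519 security.

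\textbf{Step 4: the registry conditions.} The hypotheses that $\Hi$ is registered and non-revoked at $t_e$ make $\pki$ the unique key against which V1--V3 check the signature. So no alternative key can make $\Dh'$ verify. The statement is explicitly relative to threat classes A1--A4; class A5 (compromised principal, who can simply sign $\Dh'$) is out of scope and is excluded by the EUF-CMA assumption that $\ski$ is held exclusively by $\Hi$. I would close by noting that the same reduction applies to each entry of $\Sigma$ in Definition~\ref{def:multi_apb} separately.
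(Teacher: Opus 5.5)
Your proposal is correct and follows the same route as the paper: injectivity of the RFC~8785 encoding makes $\mathrm{canon}(\Es)\Vert\mathrm{canon}(\Dh')$ a different message from the one $\Hi$ signed, so any signature verifying on it under $\pki$ is an ed25519 EUF-CMA forgery. Your version is more careful than the paper's, which only considers an adversary reusing $\sigh$ and silently assumes that $\Hi$ never legitimately signed $(\Es,\Dh')$; your one-decision-per-$\mathrm{eid}$ hypothesis is exactly the freshness condition that the appeal to EUF-CMA needs (given that hypothesis the birthday term is superfluous, and for fixed $\Es$ cancelling the common prefix already gives $m'\neq m$ without any prefix-freeness argument).
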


\begin{proof}
Verification of an APB requires that
$\sigh = \mathrm{Sign}_{\ski}(\mathrm{canon}(\Es) \,\Vert\,
\mathrm{canon}(\Dh))$. Suppose, for contradiction, that an adversary
produces $\APB' = (\Es, \Dh', \sigh)$ with $\Dh' \neq \Dh$ that
verifies. Because $\sigh$ is a fixed value and the canonical
serialisation is injective on dataclass content (Lemma~\ref{lem:integrity}),
$\mathrm{canon}(\Dh') \neq \mathrm{canon}(\Dh)$, so the message under
the signature differs. A successful verification of $\sigh$ over the
new message contradicts the existential unforgeability of ed25519
under chosen-message attack
\citep{bernstein2012highspeed,boneh2014foundations}. This event has
negligible probability in the security parameter.
\end{proof}

\begin{lemma}[$\Es$ Post-Construction Integrity]
\label{lem:integrity}
The canonical encoding $\mathrm{canon}$ used in the APB construction
is collision-resistant on dataclass content: distinct values of the
six $\Es$ fields produce distinct byte-strings.
\end{lemma}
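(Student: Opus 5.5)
The plan is to prove injectivity of $\mathrm{canon}$ on $\Es$-values directly from the structure of RFC~8785 JCS, by exhibiting a left inverse. The claim says ``collision-resistant'', but because $\mathrm{canon}$ is a deterministic function with no keys, what we actually establish is the stronger, unconditional statement that it is injective on the domain of well-typed $\Es$ values. Collision resistance then follows trivially, with no computational assumption.

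The first step is to fix the typed domain. Each of the six fields of $\Es$ is mapped to a JSON value: $\mathrm{hash}(\Azero)$ and $\mathrm{hash}(\mathrm{trace}_{\le t_e})$ as hex strings, $\Dhat(t_e)$ as an IEEE~754 double, $t_e$ as a number or ISO string, $\mathrm{eid}$ as the UUID4 string, and $\mathrm{cause}$ as a string. The whole block is mapped to a JSON object whose six member names are fixed. I will state the assumption explicitly that this dataclass-to-JSON map is itself injective. This means distinct field values give distinct JSON values: there is no lossy coercion, floats are restricted to finite doubles, and $-0$ and $+0$ are normalised in a way fixed by the implementation.

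The second step uses two properties of JCS to show that canonical serialisation is injective on JSON values. First, JCS output is valid JSON, so parsing it recovers the same abstract value: the member-name multiset is preserved, strings are escaped losslessly in UTF-8, and numbers are printed with the ECMAScript shortest round-trip algorithm, which reproduces the exact double. Hence $\mathrm{parse}\circ\mathrm{canon}=\mathrm{id}$ on this domain. Second, since $\mathrm{canon}$ has a left inverse, it is injective. Composing with the injective field encoding from the first step gives the lemma: distinct tuples of the six fields yield distinct byte strings. The same argument applies verbatim to $\Dh$, which is what Theorem~\ref{thm:nonrep} needs. For the concatenation $\mathrm{canon}(\Es)\Vert\mathrm{canon}(\Dh)$ I will add the observation that a JSON object serialisation is self-delimiting, because brace matching outside string literals determines where it ends. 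The split point is therefore unique, and the concatenated message is injective in the pair $(\Es,\Dh)$.

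The main obstacle is number handling. This is the only place where JCS is not a naive bijection: NaN and Infinity are disallowed, and integers beyond $2^{53}$ lose precision. I would deal with it by restricting the domain. Specifically, I would require $\Dhat(t_e)$ to be a finite double and $t_e$ to be either an integer below $2^{53}$ or a string, and note that the implementation rejects anything else at construction time. Under that restriction, the round-trip property of the shortest-representation algorithm closes the gap.
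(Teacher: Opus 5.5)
Your proof is correct, and it reaches the paper's conclusion by a more explicit route. The paper lists the JCS rules and simply asserts that, on a domain of strings and IEEE~754 floats, no two distinct primitive values share a canonical encoding; injectivity is taken as a given property of RFC~8785. You instead derive injectivity from a left inverse, $\mathrm{parse}\circ\mathrm{canon}=\mathrm{id}$. That left inverse rests on lossless string escaping and on the round-trip guarantee of the shortest-representation number serialiser. You also state separately the hypothesis that the dataclass-to-JSON map is itself injective.

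What your route buys is that it exposes domain conditions the paper leaves implicit. For example, RFC~8785 serialises $-0$ as \texttt{0}, so the paper's blanket claim holds only if $-0$ and $+0$ are identified as field values, which is exactly the normalisation you impose. NaN, Infinity and out-of-range integers likewise have to be excluded.

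Your self-delimiting argument for $\mathrm{canon}(\Es)\Vert\mathrm{canon}(\Dh)$ goes beyond what the paper uses. Theorem~\ref{thm:nonrep} and Lemma~\ref{lem:tamper} only ever change one block while holding the other fixed, so prefix or suffix cancellation already suffices there. Your observation is what a joint statement about the pair would need. The paper's version buys brevity at the cost of these hypotheses staying hidden.

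One small nit: restricting $t_e$ to integers below $2^{53}$ or strings is narrower than necessary. A floating-point timestamp is covered by the same finite-double round-trip argument you use for $\Dhat(t_e)$.
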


\begin{proof}
$\mathrm{canon}$ is the RFC~8785 JSON Canonicalization Scheme
\citep{rfc8785}: keys sorted by Unicode codepoint, IEEE~754
representation for numbers, UTF-8 encoding, no insignificant whitespace.
The RFC~8785 canonical mapping is injective over the primitive
dataclass domain used here: all $\Es$ and $\Dh$ fields are strings
or IEEE~754 floats, and for this domain distinct field values produce
distinct byte-level representations (the key-sort order is
deterministic and no two distinct primitive values share a
canonical encoding under RFC~8785).
The $\mathrm{eid}$ field is a UUID4 string, which is a primitive and
thus covered by the same argument.
\end{proof}

\begin{lemma}[Signature Verifiability Under Tamper]
\label{lem:tamper}
Let $\APB$ verify against $\pki$ with original $\Es$. For any
modification of $\Es$ to $\Es' \neq \Es$, retaining $\sigh$ and $\Dh$,
the modified APB does not verify under $\pki$.
\end{lemma}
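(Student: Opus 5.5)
The plan is to mirror the proof of Theorem~\ref{thm:nonrep}, with the roles of $\Es$ and $\Dh$ exchanged. I will also make explicit a step that proof glossed over: injectivity of the \emph{concatenated} message, not merely of each block separately. The statement as written ("does not verify") is unconditional, but ed25519 verification is a deterministic check on a single $(\text{message},\sigma)$ pair, and nothing information-theoretic rules out one signature verifying two messages. So I will prove the statement in the same computational sense as T8.2: for any modification produced by an efficient adversary, the modified APB verifies only with negligible probability. I will record this qualification at the start of the proof.

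\emph{Step 1: the signed message changes.} Let $m = \mathrm{canon}(\Es)\,\Vert\,\mathrm{canon}(\Dh)$ and $m' = \mathrm{canon}(\Es')\,\Vert\,\mathrm{canon}(\Dh)$. By Lemma~\ref{lem:integrity}, $\Es' \neq \Es$ gives $\mathrm{canon}(\Es') \neq \mathrm{canon}(\Es)$. To conclude $m' \neq m$, I need $x \mapsto x \,\Vert\, \mathrm{canon}(\Dh)$ to be injective on canonical encodings.

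\begin{itemize}
  \item If the two encodings have equal length, this is immediate: $m$ and $m'$ already differ in their first $|\mathrm{canon}(\Es)|$ bytes.
  \item If the lengths differ, I use that an RFC~8785 serialisation of a JSON object is \emph{self-delimiting}. It begins with \texttt{\{} and ends at the matching \texttt{\}}, and braces inside string values are escaped-safe because they occur within quoted strings. Hence the set of canonical object encodings is prefix-free.
  \item Prefix-freeness means the split point of $m'$ is recoverable, so $m' = m$ would force $\mathrm{canon}(\Es') = \mathrm{canon}(\Es)$, a contradiction.
\end{itemize}

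I expect this prefix-freeness argument to be the main obstacle, since the paper only asserts blockwise injectivity. If a clean argument is awkward, my fallback is to note that the implementation can (and should) sign a single canonical object $\{\texttt{"Es"}:\ldots,\texttt{"Dh"}:\ldots\}$, for which Lemma~\ref{lem:integrity} applies directly.

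\emph{Step 2: reduction to unforgeability.} Suppose an efficient adversary $\mathcal{A}$, given the valid $\APB$, outputs $\Es' \neq \Es$ such that $(\Es',\Dh,\sigh)$ verifies under $\pki$ with non-negligible probability. I build a forger $\mathcal{B}$ against ed25519 under EUF-CMA as follows:
\begin{itemize}
  \item $\mathcal{B}$ receives $\pki$ from its challenger.
  \item $\mathcal{B}$ constructs $\Es,\Dh$ as the honest system and principal would.
  \item $\mathcal{B}$ makes one signing query, on $m$, and obtains $\sigh$.
  \item $\mathcal{B}$ runs $\mathcal{A}$ and outputs $(m',\sigh)$.
\end{itemize}
By Step 1, $m' \neq m$, so $(m',\sigh)$ is a fresh forgery, contradicting existential unforgeability \citep{bernstein2012highspeed,boneh2014foundations}. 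The reduction is tight: one query and constant overhead.

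\emph{Step 3: optional direct check.} For concreteness I may add a direct argument from the ed25519 equation. Write $\sigh=(R,S)$; verification checks $SB = R + H(R\Vert A\Vert M)\,A$. Fixing $R$, $S$ and $A$, both $m$ and $m'$ can verify only if $H(R\Vert A\Vert m) \equiv H(R\Vert A\Vert m') \pmod{\ell}$. That is a SHA-512 collision reduced mod $\ell$, which is computationally infeasible. This shows concretely why the statement must be read "except with negligible probability."
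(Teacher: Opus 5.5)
Your proposal is correct and follows the paper's own route. Lemma~\ref{lem:integrity} gives $\mathrm{canon}(\Es')\neq\mathrm{canon}(\Es)$, so the signed message changes, and ed25519 unforgeability then rules out $\sigh$ verifying on the new message; your one-query EUF-CMA reduction and the ``except with negligible probability'' qualification only make explicit what the paper's two-line proof leaves implicit. The prefix-freeness detour in Step~1 is not needed: $\mathrm{canon}(\Dh)$ is the same fixed suffix of both $m$ and $m'$, so $m=m'$ already forces $|\mathrm{canon}(\Es)|=|\mathrm{canon}(\Es')|$ and then $\mathrm{canon}(\Es)=\mathrm{canon}(\Es')$ by cancellation, and your ``lengths differ'' case gives $|m|\neq|m'|$ immediately.
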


\begin{proof}
By Lemma~\ref{lem:integrity}, $\mathrm{canon}(\Es') \neq
\mathrm{canon}(\Es)$, hence the message under verification differs
from the message that was signed. Verification of an ed25519
signature against a different message succeeds only with negligible
probability.
\end{proof}

\subsection{T8.3 --- Impossibility of Anonymous Re-Authorization}
\label{sec:t83}

\begin{theorem}[Impossibility of Anonymous Re-Authorization]
\label{thm:anonimp}
Let an adversary $\mathcal{A}$ have full read and write access to the
execution layer's runtime state but not hold $\ski$ for any
$\Hi \in \Pset$. Then $\mathcal{A}$ cannot produce a verifying APB
attributed to any registered principal, except with negligible
probability.
\end{theorem}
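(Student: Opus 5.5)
The proof goes by reduction to the existential unforgeability of ed25519 under chosen-message attack (EUF-CMA), in the same style as the proof of Theorem~\ref{thm:nonrep}. Suppose an adversary $\mathcal{A}$ with full read/write access to the execution-layer state, but without any $\ski$, outputs with non-negligible probability $\epsilon$ an $\APB^\star = (\Es^\star, \Dh^\star, \sigma^\star)$ that passes V1--V5 and is attributed to some registered, active $\Hi$. From $\mathcal{A}$ we build a forger $\mathcal{F}$ against a single ed25519 challenge key $pk^\star$.

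The construction of $\mathcal{F}$ proceeds as follows. $\mathcal{F}$ guesses an index $i^\star \in \{1,\dots,n\}$ uniformly at random and installs $pk^\star$ as $pk_{i^\star}$ in the registry. For every other principal it generates key pairs honestly. It then simulates the whole execution layer for $\mathcal{A}$: $\Azero$, $\Dhat$, the trace, the principal registry, and the append-only log.

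Because $\mathcal{A}$ may observe legitimate governance events, $\mathcal{F}$ must also produce honest APBs.
\begin{itemize}
\item Honest APBs signed by $H_{i^\star}$ are obtained from the signing oracle on the message $\mathrm{canon}(\Es)\,\Vert\,\mathrm{canon}(\Dh)$.
\item Honest APBs signed by any other principal are computed directly with the locally held keys.
\end{itemize}
Read and write access to runtime state gives $\mathcal{A}$ nothing beyond this simulated view. The key point is that, by Definition~\ref{def:principals}, $\ski$ is held exclusively by $\Hi$, so it is not part of the runtime state. The registry is read-only to the execution layer, and any modification of it is a $\RECAL$ event, which itself requires an APB.

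With probability at least $1/n$, the guess is correct: $\APB^\star$ is attributed to $H_{i^\star}$ and verifies under $pk^\star$. In that case $\mathcal{F}$ outputs the message $m^\star = \mathrm{canon}(\Es^\star)\,\Vert\,\mathrm{canon}(\Dh^\star)$ together with $\sigma^\star$.

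The main obstacle is showing that $m^\star$ is fresh, that is, it was never queried to the oracle. Otherwise $\mathcal{A}$ may simply have replayed an honest APB, which is an A3 attack rather than a forgery. My approach here is to invoke V5: an APB whose $\mathrm{eid}$ was already accepted is rejected, so a verifying $\APB^\star$ carries a new $\mathrm{eid}$ and hence a new $\Es^\star$. By Lemma~\ref{lem:integrity}, this gives $\mathrm{canon}(\Es^\star)$ distinct from every previously signed evidence block.

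One subtlety remains. Concatenation could in principle still collide across the $\Vert$ boundary. I would argue that each canonical JSON object is self-delimiting, since its braces balance, so $(\mathrm{canon}(\Es),\mathrm{canon}(\Dh))$ is uniquely recoverable from the concatenation. Therefore $m^\star$ is new. A second case is an honest APB that was queried but never accepted; there I would note that reusing such an $\mathrm{eid}$ with modified contents still changes $m^\star$. If the contents are identical, the result is not an anonymous re-authorization at all, since $H_{i^\star}$ genuinely signed exactly that decision.

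Putting this together, $\mathcal{F}$ wins the EUF-CMA game with probability at least $\epsilon/n$. Since $n = |\Pset|$ is a fixed polynomial, $\epsilon/n$ is non-negligible whenever $\epsilon$ is. This contradicts the EUF-CMA security of ed25519 \citep{bernstein2012highspeed,boneh2014foundations}, so $\epsilon$ must be negligible. Finally, I would remark that the multi-principal case $\APB^{(k)}$ follows immediately: each entry in $\Sigma$ is a single-principal signature over the same message, so the adversary must forge at least one such entry.
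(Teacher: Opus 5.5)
Your proposal is correct and follows the same route as the paper, whose proof simply appeals to the EUF-CMA security of ed25519 via Lemma~\ref{lem:forge}. Your version makes that appeal rigorous by supplying what the paper leaves implicit: guessing the target principal at a $1/n$ loss, answering honest APBs through the signing oracle, showing $m^\star$ was never signed (including unique decodability of $\mathrm{canon}(\Es)\,\Vert\,\mathrm{canon}(\Dh)$), and keeping the registry outside $\mathcal{A}$'s reach. One simplification: V5 is not needed, because the split ``$m^\star$ was signed by $H_{i^\star}$, so by injectivity it is a genuine replay (class A3), versus $m^\star$ unsigned, so $\mathcal{F}$ wins'' already covers every case, and it does not require the nonce store to lie outside $\mathcal{A}$'s write access.
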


\begin{proof}
A verifying APB requires $\sigh$ such that
$\mathrm{Verify}_{\pki}(\sigh, \mathrm{canon}(\Es) \,\Vert\,
\mathrm{canon}(\Dh)) = \top$ for some registered $\pki$. The
adversary may construct $\Es$ and $\Dh$ at will; the question is
whether they can produce $\sigh$. Lemma~\ref{lem:forge} establishes
that producing such a $\sigh$ without access to $\ski$ requires
breaking the existential unforgeability of ed25519, which is
infeasible under the discrete log assumption on the curve25519
group \citep{bernstein2012highspeed}.
\end{proof}

\begin{lemma}[Attribution Uniqueness]
\label{lem:attribution}
A verifying APB uniquely identifies a registered principal: there is
no efficient way to produce an APB whose $\sigh$ verifies under two
distinct registered $\pki$ simultaneously.
\end{lemma}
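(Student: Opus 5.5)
The plan is a reduction to the existential unforgeability of ed25519, supported by a key-uniqueness assumption on the registry. First I would fix what ``simultaneously verifies under two distinct registered $pk_i, pk_j$'' means. There is a single message
\[
m = \mathrm{canon}(\Es)\,\Vert\,\mathrm{canon}(\Dh),
\]
and a single signature $\sigh$ such that $\mathrm{Verify}_{pk_i}(\sigh,m)=\top$ and $\mathrm{Verify}_{pk_j}(\sigh,m)=\top$ with $H_i\neq H_j$. The verifier binds the $\Hi$ field of $\Dh$ to the key it checks against. So I would first note the cheap half of uniqueness: for a given $\Dh$ the verifier only ever consults the key registered for the identity named in $\Dh$. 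Attribution is therefore syntactically unique. The substantive claim is cryptographic: no efficient party can output $(m,\sigma)$ valid under two honestly generated keys.

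The key step is the reduction. Given an adversary $\mathcal{B}$ that outputs such a double-verifying pair with non-negligible probability $\epsilon$, I would build a forger $\mathcal{F}$ against ed25519 EUF-CMA. $\mathcal{F}$ receives a challenge key $pk^\ast$ and embeds it as the key of a uniformly chosen registered principal. It generates all other keys itself, so it can answer any signing queries $\mathcal{B}$ needs for the other principals. It answers queries for $pk^\ast$ through its signing oracle.

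When $\mathcal{B}$ outputs $(m,\sigma)$ valid under $pk_i$ and $pk_j$, consider the case where at least one of these is the embedded $pk^\ast$ (probability at least $2/n$). If $(m,\sigma)$ was not obtained from the oracle for $pk^\ast$, it is a forgery under $pk^\ast$. If $\sigma$ was obtained from the oracle for $pk^\ast$, then it is an honest signature under $pk^\ast$ that also verifies under the other key, which $\mathcal{F}$ generated. I would then symmetrically embed the challenge in that other slot so that $\mathcal{B}$'s success yields a forgery there. This costs a factor of $n$ (or $n^2$) in the advantage, which is acceptable because $\Pset$ is finite.

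The main obstacle is exactly that second subcase. Ed25519 EUF-CMA alone does not formally exclude a signature being valid under two distinct keys, since key-substitution and malicious-key attacks are known for some Schnorr-type variants. The honest-key case is fine. If principals choose keys adversarially, I would instead invoke the strong-binding property of ed25519 verification. In that verification the public key is hashed into the challenge $h=H(R\Vert A\Vert m)$, and the check $8sB=8R+8hA$ must hold for both $A_i$ and $A_j$. Subtracting the two equations gives
\[
8h_iA_i = 8h_jA_j
\]
with $h_i,h_j$ random-oracle outputs depending on the respective keys. Producing such a relation for distinct, non-small-order $A_i, A_j$ amounts to finding a random-oracle collision-type relation or a discrete-log relation between registered keys. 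I would assume the registry rejects small-order keys and duplicate keys, since identical $pk_i=pk_j$ for distinct $H_i$ would trivially break the lemma.

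With those registry conditions stated, I would conclude that double verification occurs only with negligible probability. Hence every verifying APB identifies exactly one registered principal, which is what T8.2 and T8.3 use.
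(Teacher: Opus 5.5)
Your argument is correct and more careful than the paper's, but it takes a different route.

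The paper's proof is one sentence. It asserts that two distinct ed25519 keys accept the same signature on the same message only negligibly, because a ``colliding key pair'' would yield a discrete logarithm.

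You do three things instead. You give a key-embedding reduction to unforgeability, losing a factor of about $n$. You make the ed25519-specific content explicit by subtracting the two verification equations to get $8h_iA_i=8h_jA_j$, with the keys hashed into $h_i,h_j$. And you state the registry conditions (distinct keys, no small-order keys). The paper leaves these implicit in honest key generation (Definition~\ref{def:principals}), but the statement fails without them for adversarially registered keys. Duplicate keys defeat unique attribution trivially. Under cofactored verification, any two small-order keys both accept $(R,S)=(sB,s)$ on every message.

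What your version buys is that it shows where the paper's one-liner gets its force. For honest keys $A_i=a_iB$, $A_j=a_jB$, the relation reads $h_ia_i\equiv h_ja_j \pmod{L}$. Each random-oracle trial satisfies this with probability about $1/L$. A reduction that knows $a_j$ extracts $a_i\equiv h_ja_jh_i^{-1} \pmod{L}$ from any double-verifying signature. That is the discrete-log reduction the paper only asserts.

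One correction to your reduction. The step ``$(m,\sigma)$ was not obtained from the oracle for $pk^\ast$, hence a forgery'' needs strong unforgeability, not EUF-CMA. If $m$ was queried under $pk^\ast$ and a different signature was returned, $(m,\sigma)$ is not an EUF-CMA forgery. Here that is the typical case, not a corner case: in $\APB^{(k)}$ (Definition~\ref{def:multi_apb}) every cosigner signs the identical message. An EUF-CMA scheme whose verifier accepts a pair of signatures when either entry verifies is still EUF-CMA. Yet concatenating two cosigners' signatures on the shared message gives a signature valid under both keys. So ``the honest-key case is fine'' needs one of two things:
\begin{itemize}
\item SUF-CMA, which ed25519 with the $S<L$ check provides; or
\item more economically, your verification-equation argument. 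It does not care which messages were queried and can carry the whole proof alone, for honest and non-small-order adversarial keys alike.
\end{itemize}
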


\begin{proof}
Two distinct ed25519 public keys verify the same signature over the
same message only with negligible probability (the signature is bound
to the secret key, and finding a colliding key pair reduces to
solving discrete log on the curve).
\end{proof}

\begin{lemma}[Forge Resistance under Computational Hardness]
\label{lem:forge}
Any polynomial-time algorithm that produces a verifying $\sigh$ over
a chosen message without access to $\ski$ has success probability
negligible in the security parameter.
\end{lemma}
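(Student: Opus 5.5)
The plan is a direct security reduction to the existential unforgeability under chosen-message attack (EUF-CMA) of ed25519. The notion is read in its standard form, with ``verifying $\sigh$ over a chosen message'' meaning a valid signature on a message $m^*$ that has \emph{not} previously been signed by the holder of $\ski$. The freshness qualifier is needed: an adversary that merely replays a signature it has observed trivially succeeds, and that case is handled by V4/V5 in the verifier rather than by the cryptography.

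Suppose a polynomial-time algorithm $\mathcal{B}$ outputs $(m^*, \sigh^*)$ with $\mathrm{Verify}_{\pki}(\sigh^*, m^*) = \top$, where $m^*$ has the form $\mathrm{canon}(\Es)\,\Vert\,\mathrm{canon}(\Dh)$, with success probability $\epsilon$. We build an EUF-CMA forger $\mathcal{F}$ against ed25519 as follows.
\begin{enumerate}[leftmargin=2em, topsep=2pt, itemsep=1pt]
  \item $\mathcal{F}$ receives a challenge public key $pk$. It guesses an index $i \in \{1,\ldots,n\}$ uniformly and sets $\pki := pk$.
  \item For every $j \neq i$, it generates $(sk_j, pk_j)$ itself, which lets it populate the Principal Registry faithfully.
  \item It runs $\mathcal{B}$, giving it the full execution-layer state. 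This matches the A2 setting of Theorem~\ref{thm:anonimp}, since that state contains no secret keys.
  \item It answers every request for a legitimate APB signed by $H_i$ using its signing oracle, and answers requests for other principals with the keys it holds.
\end{enumerate}
The simulation is perfect because public keys are identically distributed. When $\mathcal{B}$ outputs a fresh message attributed to $H_i$, $\mathcal{F}$ outputs $(m^*, \sigh^*)$ as its forgery.

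The success analysis uses Lemma~\ref{lem:integrity}. A fresh $(\Es, \Dh)$ pair has a canonical encoding that differs from every oracle-queried message, so $m^*$ is a genuinely new message. It then remains to check that $\mathcal{F}$'s guess of $i$ is correct with probability $1/n$, independently of $\mathcal{B}$'s view. This gives $\Pr[\mathcal{F}\text{ wins}] \ge \epsilon/n$. Since $n$ is polynomial (the set $\Pset$ is finite), $\epsilon$ is negligible whenever ed25519 is EUF-CMA secure. That security is in turn assumed from \citep{bernstein2012highspeed}, modelled in the random-oracle model with SHA-512 and the discrete-log hardness on the curve25519 group.

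The main obstacle is being precise about what ``chosen message'' and ``without access to $\ski$'' mean. Two issues need care.
\begin{enumerate}[leftmargin=2em, topsep=2pt, itemsep=1pt]
  \item The message must be new relative to what the adversary has seen signed. Here I would need Lemma~\ref{lem:integrity} to extend to the concatenation $\mathrm{canon}(\Es)\Vert\mathrm{canon}(\Dh)$. The concatenation is itself injective because each JCS object is a self-delimiting balanced JSON object: the boundary is the closing brace that balances the first opening brace, with string contents escaped. I would first try to argue it exactly this way.
  \item Lemma~\ref{lem:forge} itself makes no uniqueness claim, so strong unforgeability is not needed. I would therefore not rely on ed25519's strong-unforgeability property.
\end{enumerate}
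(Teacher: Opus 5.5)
Your proof is correct. It bottoms out in the same place as the paper's, EUF-CMA of ed25519, but gets there by a more explicit route.

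The paper's proof identifies the lemma with the EUF-CMA definition and cites it. It works with a single key $\ski$ and leaves both the freshness condition and the random-oracle modelling implicit. You instead treat the lemma as a statement about an adversary that sees the whole registry $\Pset$ and may target any principal adaptively. You embed the challenge key at a guessed index and answer all other principals with keys you generate yourself, which gives $\Pr[\mathcal{F}\text{ wins}] \ge \epsilon/n$.

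The paper's version buys brevity, and for the literal single-key statement nothing more is needed. Yours buys a statement that actually suffices where the lemma is used. Theorem~\ref{thm:anonimp} quantifies over ``any registered principal,'' and Proposition~\ref{prop:byzantine} lets the adversary hold $j<k$ keys. Your reduction covers both, because $\mathcal{F}$ knows every $sk_j$ with $j\neq i$, at an explicit $1/n$ tightness loss. Your freshness qualifier also makes precise why a replayed APB, which trivially verifies cryptographically, is not a counterexample. The paper handles that case separately through V4/V5 (threat class A3) rather than inside the lemma.

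One simplification is available. Under your own message-level reading of freshness, $\mathcal{F}$'s output is a valid forgery as soon as $m^*$ was never queried. So injectivity of $\mathrm{canon}(\Es)\Vert\mathrm{canon}(\Dh)$ is not needed for this lemma at all. It is needed only when the lemma is applied to semantically distinct $(\Es,\Dh)$ pairs, as in Lemma~\ref{lem:tamper} and Theorem~\ref{thm:nonrep}.

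If you do make that translation, your self-delimiting argument works, with one correction. Braces inside JSON strings are not escaped, so the boundary is found by a parser that tracks quoted-string state, not by raw brace counting. Note also that Lemma~\ref{lem:integrity} is slightly overstated: RFC~8785 serialises $-0$ and $0$ identically, so injectivity on IEEE~754 fields holds only modulo signed zero.
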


\begin{proof}
This is the standard EUF-CMA property of ed25519 over the
curve25519 group, established under the discrete-log assumption
\citep{bernstein2012highspeed,boneh2014foundations}.
\end{proof}

\subsection{T8.4 --- Finite-Time APB Construction Termination}
\label{sec:t84}

\begin{theorem}[Finite-Time APB Construction Termination]
\label{thm:t84}
For any persistent halt event $h$ on a system with bounded admission
snapshot $\Azero$ and bounded trace $\mathrm{trace}_{\le t_e}$, the
construction of $\Es$ terminates in time bounded by
$O(|\Azero| + |\mathrm{trace}_{\le t_e}|)$.
\end{theorem}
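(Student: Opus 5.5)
The plan is to decompose the construction of $\Es$ into its six field computations from Definition~\ref{def:apb} and bound each one separately. The total cost is then the sum of the six bounds, and the claim is that this sum is $O(|\Azero| + |\mathrm{trace}_{\le t_e}|)$.

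The fields fall into two groups. The first group is the two hashes, $\mathrm{hash}(\Azero)$ and $\mathrm{hash}(\mathrm{trace}_{\le t_e})$. For these I will fix a standard streaming hash such as SHA-256. Such a hash processes its input in fixed-size blocks with constant work per block, so each hash costs time linear in the length of its serialized input. The serializations themselves are linear as well. For $\Azero$ I will use the canonical encoding $\mathrm{canon}$ of RFC~8785. For the trace I will use a length-prefixed concatenation of per-step records, which is a single pass.

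The second group is the four remaining fields, $\Dhat(t_e)$, $t_e$, $\mathrm{eid}$ and $\mathrm{cause}$, and each is $O(1)$:
\begin{itemize}
\item $t_e$ is read from a clock.
\item $\mathrm{eid}$ is drawn as 122 random bits and formatted as a UUID4 string.
\item $\mathrm{cause}$ is a label taken from a finite set of halt reasons.
\item $\Dhat(t_e)$ is the value the drift estimator already held when DC.2 classified $h$ as persistent, so it is read rather than recomputed.
\end{itemize}
Finally, assembling the six-field tuple and canonicalizing it with $\mathrm{canon}$ costs time linear in the field sizes. The two digests have constant size, so this step is $O(1)$ beyond the hashing already counted.

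The main obstacle is $\Dhat(t_e)$. If the estimator had to be re-evaluated, its cost would depend on its internal algorithm, which the paper does not fix. I will handle this by stating explicitly a standing assumption: the estimator's current value is cached by the execution layer, which it must be for DC.2's persistence test over $[t_0, t_0+\Delta T]$ to be evaluable at all. Under that assumption the read is $O(1)$. If a reader prefers to drop the assumption, the fallback is a weaker bound. Any estimator that is computable from $\Azero$ and the trace, and runs in linear time in them, keeps the stated bound. For other estimators the bound becomes $O(|\Azero| + |\mathrm{trace}_{\le t_e}| + C_{\Dhat})$, where $C_{\Dhat}$ is the cost of one estimator evaluation.

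Termination then follows from the boundedness hypotheses on $\Azero$ and $\mathrm{trace}_{\le t_e}$. This is exactly the finiteness that the proof of Theorem~\ref{thm:completeness} invokes in its persistent case.
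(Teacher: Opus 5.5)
Your proposal is correct and follows the paper's proof. The paper likewise uses the fixed six-field structure of $\Es$ (Lemma~\ref{lem:bounded_count}) and bounds each field's capture time (Lemma~\ref{lem:bounded_capture}): linear-time SHA-256 for $\mathrm{hash}(\Azero)$ and $\mathrm{hash}(\mathrm{trace}_{\le t_e})$, and $O(1)$ for $t_e$, $\mathrm{eid}$, $\mathrm{cause}$, and $\Dhat(t_e)$, the last read as already computed by the monitor. Your accounting for serialization and final assembly, and your caching assumption on $\Dhat(t_e)$ with its fallback bound, refine that same argument; the one assumption the paper states that you leave implicit is that drawing the UUID4 bits from the OS entropy source takes bounded time.
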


\begin{proof}
By Definition~\ref{def:apb}, $\Es$ has six fields:
$\mathrm{hash}(\Azero)$, $\Dhat(t_e)$, $t_e$, $\mathrm{eid}$,
$\mathrm{hash}(\mathrm{trace}_{\le t_e})$, and $\mathrm{cause}$.
Lemma~\ref{lem:bounded_count} establishes that the field count is
fixed; Lemma~\ref{lem:bounded_capture} establishes that each field's
capture time is bounded. Their conjunction gives the stated bound.
\end{proof}

\begin{lemma}[Bounded $\Es$ Field Count]
\label{lem:bounded_count}
The number of fields in $\Es$ is fixed at six and does not depend
on runtime state.
\end{lemma}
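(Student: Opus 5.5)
The argument is purely structural: it reads the field count directly off Definition~\ref{def:apb}. I would first observe that $\Es$ is \emph{defined} as a fixed-arity tuple,
\[
  \Es = (\mathrm{hash}(\Azero),\; \Dhat(t_e),\; t_e,\; \mathrm{eid},\;
         \mathrm{hash}(\mathrm{trace}_{\le t_e}),\; \mathrm{cause}),
\]
so its arity is six by construction. The proof then reduces to checking that no component expands into a variable number of sub-fields depending on runtime state.

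I would go through the six components in turn.
\begin{enumerate}[leftmargin=2em, topsep=2pt, itemsep=1pt]
  \item $\mathrm{hash}(\Azero)$ and $\mathrm{hash}(\mathrm{trace}_{\le t_e})$ are single digests. Their output is one fixed-length string, however large $\Azero$ or the trace is. The runtime-dependent size is absorbed into the \emph{argument} of the hash, not into the shape of $\Es$.
  \item $\Dhat(t_e)$ is a single scalar (an IEEE~754 float).
  \item $t_e$ is a single timestamp.
  \item $\mathrm{eid}$ is a single UUID4 string.
  \item $\mathrm{cause}$ is a single primitive string label.
\end{enumerate}
This matches the typing already relied on in the proof of Lemma~\ref{lem:integrity}, where all $\Es$ fields are strings or floats. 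Hence the canonical JSON object $\mathrm{canon}(\Es)$ always has exactly six keys, and the key set is independent of $\Azero$, of the trace, and of the halt event $h$.

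The only real obstacle is making sure that $\mathrm{cause}$ and the trace do not smuggle in a variable-length structure, such as a list of failed gates or the raw trace events. To close this off, I would invoke the primitive-domain typing assumption stated in Lemma~\ref{lem:integrity}, under which every field is a primitive. I would also note that including the raw trace would contradict Definition~\ref{def:apb}, which includes only its hash. With that settled, the count of six follows immediately. This count is the constant factor that Theorem~\ref{thm:t84} multiplies by the per-field capture bounds of Lemma~\ref{lem:bounded_capture}.
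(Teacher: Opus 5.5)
Your proposal is correct and takes essentially the same route as the paper: the paper's proof is the one-line observation that the six-field arity is fixed by Definition~\ref{def:apb} (``at compile time''). Your field-by-field check that each component is a single primitive, so none expands with runtime state, simply makes that same structural argument explicit.
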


\begin{proof}
By Definition~\ref{def:apb}, fixed at compile time.
\end{proof}

\begin{lemma}[Bounded Capture Time Per Field]
\label{lem:bounded_capture}
Each of the six $\Es$ fields can be computed in time bounded
linearly in the size of its underlying input:
$\mathrm{hash}(\Azero)$ in $O(|\Azero|)$;
$\Dhat(t_e)$ in $O(1)$ (already computed by the IML monitor);
$t_e$ in $O(1)$;
$\mathrm{eid}$ in $O(1)$ (UUID4 generation is a bounded-time
operation under standard OS entropy assumptions, reading a fixed-length
random bitstring from the kernel's CSPRNG);
$\mathrm{hash}(\mathrm{trace}_{\le t_e})$ in $O(|\mathrm{trace}_{\le t_e}|)$;
$\mathrm{cause}$ in $O(1)$ (a fixed-vocabulary string).
\end{lemma}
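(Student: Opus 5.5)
The plan is to go through the six fields of $\Es$ in the order of Definition~\ref{def:apb} and bound the cost of computing each one separately. Lemma~\ref{lem:bounded_count} guarantees there are exactly six fields, so that list is complete. The fields fall into two groups.

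\emph{Hashed fields.} These are $\mathrm{hash}(\Azero)$ and $\mathrm{hash}(\mathrm{trace}_{\le t_e})$. The argument has two steps.
\begin{enumerate}[leftmargin=2em, topsep=2pt, itemsep=1pt]
  \item Fix a Merkle--Damg\aa rd or sponge hash with constant block size (e.g.\ SHA-256). It processes an input of $m$ bits in $\lceil m/b\rceil + O(1)$ compression calls, and each call takes constant time. The hash itself is therefore $O(m)$.
  \item The object must first be serialised with $\mathrm{canon}$ before hashing. RFC~8785 requires sorting keys, so I will assume the key set of $\Azero$ and of each trace record is drawn from a fixed schema. The sort then costs $O(1)$ per record, and serialisation is linear in the encoded size.
\end{enumerate}
Composing the two steps gives $O(|\Azero|)$ and $O(|\mathrm{trace}_{\le t_e}|)$, where $|\cdot|$ is the canonical encoding length. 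The statement's boundedness hypotheses on $\Azero$ and the trace make these quantities finite.

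\emph{Constant-time fields.} These are $\Dhat(t_e)$, $t_e$, $\mathrm{eid}$ and $\mathrm{cause}$, each handled as a single read or bounded operation.
\begin{enumerate}[leftmargin=2em, topsep=2pt, itemsep=1pt]
  \item $\Dhat(t_e)$ is the value the drift monitor has already computed and used to trigger the halt under DC.2. Capturing it is a read of one IEEE~754 float, so $O(1)$.
  \item $t_e$ is a single clock read.
  \item $\mathrm{cause}$ is selected from a fixed finite vocabulary, so its length is a constant.
  \item $\mathrm{eid}$ needs 122 random bits from the kernel CSPRNG followed by a fixed-length formatting step.
\end{enumerate}

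The main obstacle is $\mathrm{eid}$. A blocking entropy source could in principle stall generation, so the $O(1)$ bound holds only under the standard assumption that the CSPRNG is already seeded, after which reads are non-blocking and bounded. I will state this assumption explicitly as part of the execution model rather than try to prove it. A second subtlety is $\Dhat(t_e)$: if the monitor did not cache its value, recomputing it could cost more than constant time. I will rely on DC.2, under which the value must already exist for the halt to have been classified as persistent.

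Summing six bounded terms gives the total bound $O(|\Azero| + |\mathrm{trace}_{\le t_e}|)$. Each field is a finite computation, so the construction of $\Es$ terminates.
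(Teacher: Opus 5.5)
Your proposal is correct and follows the same field-by-field route as the paper, whose entire proof is that SHA-256 is linear in its input and the remaining fields are elementary; your CSPRNG-seeded and cached-$\Dhat$ caveats only make explicit what the lemma statement already builds in. The one addition is your canonical-serialisation step with its fixed-schema assumption, which the paper sidesteps by implicitly taking $|\Azero|$ and $|\mathrm{trace}_{\le t_e}|$ to be the lengths of the byte strings fed to the hash, so it is an extra (harmless) hypothesis that makes your version more careful rather than a needed ingredient.
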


\begin{proof}
SHA-256 is linear in input size; the remaining operations are
elementary.
\end{proof}

\subsection{Proposition 8.5 --- Byzantine Principal Resistance}
\label{sec:p85}

\begin{proposition}[Byzantine Principal Resistance]
\label{prop:byzantine}
Under the $k$-of-$n$ threshold construction of
Definition~\ref{def:multi_apb}, an adversary $\mathcal{A}$ who
controls $j < k$ principal signing keys cannot produce a verifying
$\APB^{(k)}$, except with negligible probability.
\end{proposition}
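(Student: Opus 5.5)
The plan is a reduction to the existential unforgeability of ed25519 (Lemma~\ref{lem:forge}), combined with a pigeonhole step over the distinct-principal requirement of Definition~\ref{def:multi_apb}.

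\textbf{Step 1: counting.} Suppose $\mathcal{A}$ outputs $\APB^{(k)} = (\Es, \Dh, \Sigma, k)$ that verifies. By Definition~\ref{def:multi_apb}, at least $k$ entries of $\Sigma$ pass V1--V3 and V5 over \emph{distinct} principals. Duplicate signer entries are rejected by structural property~2. Let $C \subseteq \Pset$ with $|C| = j$ be the set of principals whose keys $\mathcal{A}$ controls. Then at most $j < k$ of the counted entries can belong to principals in $C$. By pigeonhole, at least one counted entry $(H^\ell, \sigma^\ell)$ has $H^\ell \notin C$ and $\sigma^\ell$ verifying under $pk^\ell$ on the message $m^* = \mathrm{canon}(\Es)\,\Vert\,\mathrm{canon}(\Dh)$.

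\textbf{Step 2: reduction.} Build a forger $\mathcal{B}$ against a single ed25519 key $pk^\star$. $\mathcal{B}$ guesses an index $i^\star$ uniformly among the honest principals in $\Pset \setminus C$ and embeds $pk^\star$ as that principal's registry key. It generates all other key pairs itself and hands $\mathcal{A}$ the secret keys of $C$. Any honest signatures $\mathcal{A}$ may legitimately observe are answered with $\mathcal{B}$'s own keys, or with the signing oracle for $i^\star$. When $\mathcal{A}$ succeeds, $H^\ell = H_{i^\star}$ with probability at least $1/(n-j)$, and $\mathcal{B}$ outputs $(m^*, \sigma^\ell)$. This gives
\[
\Pr[\mathcal{A}\text{ wins}] \le (n-j)\cdot \mathrm{Adv}^{\mathrm{EUF\text{-}CMA}}_{\mathrm{ed25519}}(\mathcal{B}),
\]
which is negligible since $n$ is polynomial (indeed fixed and finite by Definition~\ref{def:principals}). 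This is exactly the argument of Theorem~\ref{thm:anonimp} applied entrywise, as promised by structural property~3.

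\textbf{Main obstacle.} The delicate step is ensuring that $m^*$ is a \emph{fresh} message for the honest key, so that the forgery is non-trivial. If the honest principal $H_{i^\star}$ previously signed exactly $m^*$ for a legitimate event, $\mathcal{A}$ could simply copy that signature. Two mechanisms rule this out.

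\begin{enumerate}[leftmargin=2em, topsep=2pt, itemsep=1pt]
  \item V5 and the UUID4 $\mathrm{eid}$ embedded in $\Es$ ensure that a replayed message for an already-accepted event is rejected.
  \item By Lemma~\ref{lem:integrity}, any different $(\Es,\Dh)$ yields a different canonical message.
\end{enumerate}

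I would therefore state the proposition's success condition as producing a verifying $\APB^{(k)}$ for an $(\Es,\Dh)$ pair not already endorsed by the honest principal. I would then argue that in that case $m^*$ lies outside the oracle's query set, which is precisely the EUF-CMA winning condition. A minor secondary point is that the honest principal must be non-revoked and registered (V1--V2), which the registry's read-only status (Definition~\ref{def:principals}) guarantees $\mathcal{A}$ cannot subvert by inserting its own keys without itself producing a $\RECAL$ APB.
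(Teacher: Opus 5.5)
Your proposal is correct and follows the same route as the paper. Both count the distinct-principal entries that must pass V1, then invoke Theorem~\ref{thm:anonimp} (EUF-CMA of ed25519) entrywise on an entry whose key $\mathcal{A}$ does not hold. Your version is more careful on two points the paper leaves implicit: the $1/(n-j)$ guessing loss, and the freshness condition without which the paper's claim that entries for unheld keys fail V1 is false, since an honest cosigner's signature on the same $(\Es,\Dh)$ could simply be copied. V5 alone does not supply that freshness, because a signature harvested from a not-yet-accepted APB passes V5; it is your restated winning condition, not the nonce store, that makes the reduction go through.
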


\begin{proof}
A verifying $\APB^{(k)}$ requires at least $k$ entries in $\Sigma$
to satisfy V1 (valid signature against a registered public key).
By T8.3 (Impossibility of Anonymous Re-Authorization), each valid
entry requires possession of the corresponding signing key $sk^j$:
producing a valid $\sigma^j$ without $sk^j$ reduces to breaking
EUF-CMA of ed25519, which is infeasible under the discrete-log
assumption. If $\mathcal{A}$ holds $j < k$ keys, they can produce at
most $j$ valid entries; the threshold check fails. Entries for keys
not held by $\mathcal{A}$ fail V1 with all but negligible probability.
Therefore $\mathcal{A}$ cannot produce a verifying $\APB^{(k)}$
with $j < k$ keys.
\end{proof}

\begin{remark}
Proposition~\ref{prop:byzantine} formalises the security intuition
for the composable access-control layer described in
\S\ref{sec:composability}: with $k = \lceil n/2 \rceil + 1$, the
construction achieves strict majority quorum; with $k = n$, it
requires unanimous governance. The adversary's capture cost grows
linearly in $k$, independently of the specific policy encoded in
$\Dh$.
\end{remark}

\section{Implementation}
\label{sec:impl}

We provide a real-cryptography reference implementation of the APB
in Python on top of the existing governance stack of
\citep{fernandez2026applied}. All four theorems' empirical claims
(\S\ref{sec:exp_a}--\S\ref{sec:exp_d}) are verified against this
implementation.

\paragraph{Cryptography.}
We use ed25519 from the \texttt{cryptography} library
\citep{cryptographylib}, version $\geq 42$. Each $\Hi \in \Pset$ has a
$32$-byte private key (held only by $\Hi$) and a $32$-byte public key
held in the registry. Signatures are $64$ bytes. Canonical
serialisation is the RFC~8785 JSON Canonicalization Scheme (JCS),
implemented via the \texttt{jcs} library~\citep{rfc8785}: keys sorted
by Unicode codepoint, IEEE~754 number representation, no insignificant
whitespace, UTF-8 encoding. This replaces ad-hoc
\texttt{sort\_keys=True} JSON and eliminates any cross-implementation
ambiguity in the signed message.

\paragraph{Modules.}
Four new modules implement the framework
(see Table~\ref{tab:modules}).

\begin{table}[t]
\centering
\caption{Implementation modules introduced in this work.
LOC counts include docstrings.}
\label{tab:modules}
\small
\begin{tabular}{lrl}
\toprule
Module & LOC & Responsibility \\
\midrule
\texttt{agent/principal.py}        & 140 & Principal Set $\Pset$, keypair generation, registry, revocation. \\
\texttt{stack/apb.py}              & 220 & $\Es$, $\Dh$, $\APB$ classes; RFC~8785 encoding; signing; event\_id. \\
\texttt{stack/apb\_verifier.py}    & 155 & 5-predicate verification (V1--V5); failure-mode enum; attribution. \\
\texttt{stack/governance\_layer.py}& 190 & Authority Resolution Function $G$; built-in policies. \\
\texttt{stack/multi\_principal.py} & 280 & $\APB^{(k)}$, $k$-of-$n$ verifier, threshold governance layer. \\
\texttt{stack/apb\_log.py}         & 230 & HMAC-chained JSONL log; tamper-evident append; thread-safe writes. \\
\midrule
Total                              & 1215 & Plus 103 unit tests over all modules. \\
\bottomrule
\end{tabular}
\end{table}

\paragraph{Verifier predicates.}
\label{sec:impl_verifier}%
The verifier evaluates five predicates against an APB and a
registry, and returns the first failing predicate (or VALID):
\begin{enumerate}[leftmargin=2em, topsep=2pt, itemsep=1pt]
  \item[V1.] The signed message verifies against $\pki$ for the
        $\Hi$ named in $\Dh$.
  \item[V2.] $\Hi$ is in the registry.
  \item[V3.] $\Hi$ was active at $t_e$ (not revoked before signing).
  \item[V4.] $t_e$ lies within an acceptable window relative to the
        verification time (temporal freshness / clock-based replay defence).
  \item[V5.] $\mathrm{eid}$ has not been accepted in the current
        verification session (semantic uniqueness / duplicate-submission
        defence). The verifier maintains a caller-supplied nonce store;
        the first accepted APB with a given $\mathrm{eid}$ adds it to
        the store, and any subsequent submission of the same
        $\mathrm{eid}$ is rejected regardless of $t_e$.
\end{enumerate}
V4 and V5 are complementary: V4 rejects temporally stale or future
APBs; V5 rejects exact replays within the acceptance window.
Together they provide both temporal freshness and semantic uniqueness.
The enumerated failure modes (\texttt{INVALID\_SIGNATURE},
\texttt{PRINCIPAL\_NOT\_FOUND}, \texttt{PRINCIPAL\_REVOKED},
\texttt{REPLAY}, \texttt{DUPLICATE\_EVENT\_ID}, \texttt{MALFORMED})
allow Experiment~B (\S\ref{sec:exp_b}) to count detection rates per
attack class.

\paragraph{Persistence.}
\label{sec:impl_log}%
APBs are persisted in append-only JSONL files with HMAC chaining:
each entry's HMAC is computed over the previous entry's HMAC
concatenated with the current entry's canonical bytes. This gives
log-level tamper evidence orthogonal to the per-APB signature: any
deletion or reordering of entries breaks the chain. We do not use a
Merkle tree in this implementation (it adds machinery for a
distributed-log scenario that the present paper does not need); we
note in \S\ref{sec:discussion} that a Merkle structure is the
natural extension for distributed deployments.

\paragraph{Decision policies.}
We provide three built-in \texttt{DecisionPolicy} callables that the
governance layer can route to: \texttt{always\_resume},
\texttt{always\_deny}, and \texttt{threshold\_policy(deny\_above,
recalibrate\_above)}. The threshold policy issues $\RESUME$ when
$\Dhat(t_e) <$ \texttt{deny\_above}; $\DENY$ when
$\texttt{deny\_above} \le \Dhat(t_e) <$ \texttt{recalibrate\_above};
$\RECAL$ otherwise. The choice of $(\theta_d, \theta_r)$ is operator
configuration; Experiment~A (\S\ref{sec:exp_a}) shows that
governance completeness is invariant under this choice.

\subsection{Proposition 5.1 --- $\Tstar$ Calibration}
\label{sec:prop_calibration}

\begin{proposition}[$\Tstar$ Calibration Range]
\label{prop:calibration}
Let model $M$ have measured drift threshold $\Tstar_M$ with run-to-run
standard deviation $\sigma_M$. The persistence window $\Delta T$ in
DC.2 satisfies the calibration criterion if
\[
   k_1 \cdot \sigma_M \;\le\; \Delta T \;\le\; k_2 \cdot \Tstar_M,
\]
with $k_1 \geq 3$ to bound the false-positive rate below the run-to-run
noise of $\Tstar$ and $k_2 \leq 0.5$ so that the persistence window
declares a halt before the measured drift threshold is reached.
\end{proposition}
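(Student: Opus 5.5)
The proposition is a calibration criterion rather than a deep theorem, so I would make its two implicit claims precise and then verify each under an explicit noise model for the measured threshold. Model the per-run drift threshold of $M$ as a random variable $T$ with mean $\Tstar_M$ and standard deviation $\sigma_M$. These are the quantities estimated across seeds in the cross-model study. The two claims to establish are these. (i) If $\Delta T \ge k_1\sigma_M$ with $k_1\ge 3$, then a transient excursion of $\Dhat$ above $\theta$ is misclassified as persistent under DC.2 with probability at most the tail mass beyond $k_1\sigma_M$. (ii) If $\Delta T \le k_2 \Tstar_M$ with $k_2\le 1/2$, then a persistence window started at the onset $t_0$ of a genuine drift episode closes before the accumulated drift reaches $\Tstar_M$.

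For the lower bound (i), I would treat the run-to-run jitter in when $\Dhat$ crosses $\theta$ as having scale $\sigma_M$. This assumes that transient crossings last no longer than the threshold noise. Under that assumption, a halt that survives a window of length $\Delta T$ must exceed the mean crossing duration by at least $k_1\sigma_M$. Chebyshev's inequality then bounds the false-positive probability by $1/k_1^2 \le 1/9$ with no distributional assumptions. If approximate normality is assumed, which is consistent with the reported $\sigma/\Tstar<2\%$, the bound tightens to about $0.13\%$ one-sided. I would present both and state that $k_1\ge 3$ is the conventional three-sigma choice.

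For the upper bound (ii), the argument is a direct inequality. Persistence is declared at $t_0+\Delta T \le t_0 + k_2\Tstar_M \le t_0 + \tfrac12 \Tstar_M$, which is strictly before $t_0+\Tstar_M$. This leaves a margin of at least $\tfrac12\Tstar_M$ for constructing $\Es$ and obtaining a signature. That margin is finite by Theorem~\ref{thm:t84}, and routing the event through (R2) of Theorem~\ref{thm:completeness} is then well defined before the threshold is hit. I would also add the nonemptiness condition $k_1\sigma_M \le k_2\Tstar_M$, that is $\sigma_M/\Tstar_M \le k_2/k_1 \le 1/6$. Every measurable model satisfies this easily given $\sigma/\Tstar<2\%$. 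For a model with $\Tstar_M=\infty$ the upper constraint is vacuous, which I would note as a remark.

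The main obstacle is step (i), because it requires justifying that transient-halt durations are governed by the same $\sigma_M$ as the threshold itself. The paper measures only $\Tstar$ variability, not the statistics of transient excursion lengths. I would first try to make this an explicit modelling hypothesis, namely that excursion durations are dominated by threshold jitter. I would then state the proposition as sufficient rather than necessary, relying on the distribution-free Chebyshev bound so that the claim does not depend on normality.
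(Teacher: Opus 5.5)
\paragraph{Review.}
The paper gives no deductive proof of this proposition. It calls it a design recipe and treats $k_1,k_2$ as operator choices. It says the only checkable content is that $[k_1\sigma_M,k_2\Tstar_M]$ is nonempty, i.e.\ that $\sigma_M/\Tstar_M$ is small, and Experiment~C confirms this for the five models that cross $\theta$. Your condition $\sigma_M/\Tstar_M\le k_2/k_1\le 1/6$ is exactly that content and is fine. For \texttt{gpt-oss:20b}, however, $\sigma_M$ is undefined as well, so the recipe is inapplicable, as the paper says, rather than half-vacuous.

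The problem is that steps (i) and (ii) try to turn the two parenthetical rationales into theorems, and step (ii) cannot hold under the paper's own definitions. The statement's wording invites that reading, but it does not survive contact with DC.2. DC.2 declares a halt persistent only if $\Dhat(t)\ge\theta$ for every $t\in[t_0,t_0+\Delta T]$, and $\Tstar$ is the first drift step at which $\Dhat\ge\theta$. So any window that certifies persistence opens no earlier than that run's crossing. The declaration then comes at the crossing plus $\Delta T$ or later, never before the threshold is reached. A window opened at drift onset and closed by $\tfrac{1}{2}\Tstar_M$ contains no step with $\Dhat\ge\theta$. Nothing is declared persistent, case (R2) of Theorem~\ref{thm:completeness} is never entered, and the claimed margin of $\tfrac{1}{2}\Tstar_M$ for building $\Es$ and collecting a signature does not exist. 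Your inequality is correct arithmetic but certifies nothing. The most the upper bound can honestly deliver is a detection-latency bound: a persistent event is declared within $k_2\Tstar_M$ steps after the crossing.

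Step (i) has the gap you flagged, and it is worse than an unverified hypothesis. $\sigma_M$ is the across-run spread of a first-passage time. A DC.2 false positive is a within-run event: a transient excursion of $\Dhat$ above $\theta$ that happens to last at least $\Delta T$. Nothing in the paper links the two, and the data argue against any link. \texttt{mistral:7b} has $\sigma_M=0$, so your hypothesis would force transient excursions to have zero length and would certify $\Delta T=0$ with zero false-positive probability.

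Even granting a scale assumption, Chebyshev controls deviation from the mean excursion length $\mu_D$. It therefore needs $\Delta T\ge\mu_D+k_1\sigma_M$, not $\Delta T\ge k_1\sigma_M$. The alternative is to assume $\mathbb{E}[D^2]\le\sigma_M^2$ and apply Markov to $D^2$. Separately, a coefficient of variation below $2\%$ from three runs says nothing about normality, so the $0.13\%$ figure has no support.

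A genuine false-positive guarantee would need $k_1$ to multiply a separately measured scale of within-run excursion lengths, not $\sigma_M$. Absent that, present the proposition as the paper does: a recipe whose only verifiable content is the nonemptiness check.
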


This is a design recipe rather than a deductive theorem: $k_1$ and
$k_2$ are operator choices reflecting tolerance to false positives
and false negatives respectively. The empirical content of
Proposition~\ref{prop:calibration} is the claim that the range
$[k_1 \sigma_M, k_2 \Tstar_M]$ is non-empty for the models we
measure --- equivalently, that $\sigma_M / \Tstar_M$ is small. We
verify this in \S\ref{sec:exp_c} for five of the six models in our
study; one model does not drift within the experimental window
and is treated separately in \S\ref{sec:discussion}.

\section{Experiment A: Governance Completeness}
\label{sec:exp_a}

\paragraph{Goal.} Empirical demonstration that T8.1 holds: every
runtime halt produced by the governance stack resolves through either
the recovery loop (R1) or a signed APB (R2), with no third path
observed.

\paragraph{Setup.} The full stack
$\text{ACP} + \text{IML} + \text{RAM} + \text{RecoveryLoop} +
\text{GovernanceLayer}$ is run for $1000$ steps per seed (50-step
burn-in plus 950 drift steps), with $10$ seeds (\texttt{42}--\texttt{51}),
coverage $0.70$, $\theta = 0.20$, and a single principal
$H_{\mathrm{alice}}$ in $\Pset$. Whenever the recovery loop returns
$\HALT$ or $\DENY$ on a halt, the governance layer is invoked with a
threshold-based decision policy; the resulting APB is verified
through the full verifier (V1--V5) before being recorded.
V5 is vacuously satisfied in this experiment: each governance event
generates a fresh UUID4 \texttt{event\_id}, so the nonce store
never contains a duplicate.

\paragraph{Policy sensitivity.} We run the same stack under two
threshold policies. The \emph{default} policy uses
$(\theta_d, \theta_r) = (0.40, 0.70)$, which causes essentially all
governance events to receive $\RESUME$ given the empirical $\Dhat$
distribution under this protocol. The \emph{low-threshold} policy
uses $(0.25, 0.32)$, which exercises the full
$\{\RESUME, \DENY, \RECAL\}$ output range. Both policies are run
against the same seeds.

\paragraph{Result.} Table~\ref{tab:exp_a_completeness_default}
reports the per-seed and aggregate counts for the default policy.
Across $10$ seeds and $3{,}812$ halt events, every halt resolves through
exactly one of the two paths; no third path is observed. The
\emph{neither} count, which would witness a violation of T8.1, is
zero. Table~\ref{tab:exp_a_policy_comparison} shows that the
low-threshold policy produces an identical halt count and an identical
APB count, but a different distribution of decisions
($1037$ $\RESUME$ / $727$ $\DENY$ / $36$ $\RECAL$ instead of
$1800 / 0 / 0$): completeness is policy-invariant; the decision
distribution is not.

\begin{table}[t]
\centering
\caption{Governance Completeness (T8.1) (default policy): every HALT resolves through Recovery
or a signed APB. NEITHER count is 0 across all seeds (assertion PASSED).}
\label{tab:exp_a_completeness_default}
\begin{tabular}{rrrrr}
\toprule
Seed & HALTs & Recovery RESUME & APB-signed & NEITHER \\
\midrule
  42 & 396 & 211 & 185 & 0 \\
  43 & 401 & 207 & 194 & 0 \\
  44 & 368 & 174 & 194 & 0 \\
  45 & 387 & 191 & 196 & 0 \\
  46 & 364 & 201 & 163 & 0 \\
  47 & 387 & 217 & 170 & 0 \\
  48 & 371 & 168 & 203 & 0 \\
  49 & 378 & 173 & 205 & 0 \\
  50 & 398 & 231 & 167 & 0 \\
  51 & 362 & 239 & 123 & 0 \\
  \midrule
  \textbf{Total} & \textbf{3812} & \textbf{2012} & \textbf{1800} & \textbf{0} \\
\bottomrule
\end{tabular}
\end{table}

\begin{table}[t]
\centering
\caption{Policy sensitivity in Exp A: T8.1 (Governance Completeness) holds
identically under both threshold policies. The APB decision distribution
varies with policy parameters; Completeness is policy-agnostic.}
\label{tab:exp_a_policy_comparison}
\begin{tabular}{lcrrrcr}
\toprule
Policy & $(\theta_d,\theta_r)$ & HALTs & Rec.\ RESUME & APB-signed
       & RES/DENY/RECAL & NEITHER \\
\midrule
  default & (0.40,0.70) & 3812 & 2012 & 1800 & 1800/0/0 & 0 \\
  low\_thresh & (0.25,0.32) & 3812 & 2012 & 1800 & 1037/727/36 & 0 \\
\bottomrule
\end{tabular}
\end{table}

\paragraph{Reading.} The two tables together separate two distinct
empirical claims. T8.1 is about \emph{path exhaustivity}: every halt
ends through R1 or R2, never neither, regardless of operator policy.
The decision distribution within R2 is a property of the policy, not
of T8.1; the framework is policy-agnostic.

\section{Experiment B: APB Integrity}
\label{sec:exp_b}

\paragraph{Goal.} Empirical demonstration that T8.2 and T8.3 hold:
no modification of an APB after signing is undetectable, and no
forgery without $\ski$ verifies. We exercise nine adversarial vectors
covering the threat-model classes A1--A4 of \S\ref{sec:threat}.

\paragraph{Setup.} $200$ freshly-signed APBs are generated under
varied evidence (each $\Es$ uses random $A_0$-hash, varying $\Dhat$
across $[0.20, 0.45]$, and a current $t_e$). A baseline run confirms
all $200$ untampered APBs verify cleanly. Each attack vector is then
applied to each APB, and the verifier output is recorded.

\paragraph{Vectors.}
\begin{description}[leftmargin=2em, topsep=2pt, itemsep=1pt]
  \item[A1.1--A1.5: tamper $\Es$ field-by-field.]
        For each of the five $\Es$ fields
        (\texttt{A\_0\_hash}, \texttt{D\_hat}, \texttt{t\_e},
        \texttt{trace\_hash}, \texttt{cause}), substitute a
        modified value while retaining $\Dh$ and $\sigh$.
  \item[A2a: forge with random bytes.] Replace $\sigh$ with $64$
        random bytes.
  \item[A2b: forge with attacker keypair.] Re-sign
        $\mathrm{canon}(\Es) \,\Vert\, \mathrm{canon}(\Dh)$ with a
        freshly-generated key whose public counterpart is not in
        $\Pset$.
  \item[A3: identity swap.] Replace $\Hi$ in $\Dh$ with a different
        registered principal, retaining $\sigh$.
  \item[A4: replay.] Verify the (untampered) APB at a clock value
        $t_e + 360\,\mathrm{s}$ with replay window $300\,\mathrm{s}$.
\end{description}

\paragraph{Result.} Table~\ref{tab:exp_b_integrity} reports the
per-vector counts. Across $1{,}800$ attack attempts, the verifier
detects every attack: $1800/1800$, $0$ misses, $100.0000\%$
detection rate. T8.2 and T8.3 hold empirically.

\begin{table}[t]
\centering
\caption{APB Integrity (T8.2 + T8.3): four attack vectors against
200 freshly-signed APBs. All tampering attempts are detected
by the verifier. T8.2: PASSED. T8.3: PASSED.}
\label{tab:exp_b_integrity}
\begin{tabular}{lrrrr}
\toprule
Attack vector & Trials & Detected & Missed & Detection rate \\
\midrule
  Tamper $E_s$.A\_0\_hash & 200 & 200 & 0 & 100.00\% \\
  Tamper $E_s$.D\_hat & 200 & 200 & 0 & 100.00\% \\
  Tamper $E_s$.t\_e & 200 & 200 & 0 & 100.00\% \\
  Tamper $E_s$.trace\_hash & 200 & 200 & 0 & 100.00\% \\
  Tamper $E_s$.cause & 200 & 200 & 0 & 100.00\% \\
  \midrule
  Forge $\sigma_h$ (random bytes) & 200 & 200 & 0 & 100.00\% \\
  Forge $\sigma_h$ (attacker key) & 200 & 200 & 0 & 100.00\% \\
  Identity swap ($H_i$ replaced) & 200 & 200 & 0 & 100.00\% \\
  Replay (at $t_e + 360$\,s) & 200 & 200 & 0 & 100.00\% \\
  \midrule
  \textbf{Total} & \textbf{1800} & \textbf{1800} & \textbf{0} & \textbf{100.00\%} \\
\bottomrule
\end{tabular}
\end{table}

\paragraph{Reading.} The five $\Es$-tampering vectors collectively
exercise Lemma~\ref{lem:tamper} on every $\Es$ field, witnessing
T8.2. The two forgery vectors witness T8.3 (the system layer cannot
produce a valid signature without $\ski$). The identity-swap vector
witnesses Lemma~\ref{lem:attribution}: substituting $\Hi$ in $\Dh$
without recomputing $\sigh$ under the new principal's key is detected.
The replay vector witnesses the verifier's V4 predicate.

\section{Experiment C: Cross-Model $\Tstar$}
\label{sec:exp_c}

\paragraph{Goal.} Empirical characterisation of the model-specific
drift threshold $\Tstar$ across six open LLM families and validation
of the calibration criterion of Proposition~\ref{prop:calibration}.

\paragraph{Setup.} Six models (\texttt{mistral:7b},
\texttt{deepseek-r1:8b}, \texttt{gemma4:latest}, \texttt{gpt-oss:20b},
\texttt{qwen2.5:7b}, \texttt{llama3.2:3b}) are run via Ollama through
the LiveLLM tool selector, $50$-step burn-in plus $500$ drift steps,
sampling temperature $T = 0.4$, $\theta = 0.20$, $3$ runs per model.
$\Tstar$ is the first drift-step index at which $\Dhat \geq \theta$.

\paragraph{Result.} Table~\ref{tab:exp_c_crossmodel} reports per-model
$\Tstar$ and final $\Dhat$ values, sorted by parameter count. The
intra-model coefficient of variation $\sigma/\Tstar$ is reported in
the rightmost column.

\begin{table}[t]
\centering
\caption{Cross-model $T^*$: 6 LLM families, 3 runs each, 500 drift steps,
$T = 0.4$, $\theta = 0.2$. $T^*$ as mean $\pm$ std across runs;
$\sigma/T^*$ is the intra-model coefficient of variation. Proposition~5.1
calibration criterion ($\sigma/T^* < 5\%$ for measured $T^*$): HOLDS. 1~model(s) did not reach $\theta$ within the 500-step window.}
\label{tab:exp_c_crossmodel}
\begin{tabular}{lllrrr}
\toprule
Model & Family & Params & $T^*$ (mean$\pm$std) & $D_{\mathrm{final}}$ & $\sigma/T^*$ \\
\midrule
  llama3.2:3b & Meta & 3.2B & 151 $\pm$ 0.9 & 0.430 $\pm$ 0.000 & 0.62\% \\
  gemma4:latest & Google & 4.0B & 264 $\pm$ 5.2 & 0.272 $\pm$ 0.010 & 1.99\% \\
  mistral:7b & Mistral & 7.2B & 157 $\pm$ 0.0 & 0.434 $\pm$ 0.001 & 0.00\% \\
  qwen2.5:7b & Alibaba & 7.6B & 154 $\pm$ 2.9 & 0.427 $\pm$ 0.001 & 1.91\% \\
  deepseek-r1:8b & DeepSeek & 8.0B & 160 $\pm$ 0.8 & 0.424 $\pm$ 0.000 & 0.51\% \\
  gpt-oss:20b & OpenAI & 20.0B & --- & 0.071 $\pm$ 0.005 & --- \\
\bottomrule
\end{tabular}
\end{table}

\paragraph{Reading.}
Five of six models reach $\theta$ within the $500$-step window, with
$\sigma/\Tstar$ between $0.00\%$ (perfect replication on
\texttt{mistral:7b}) and $1.99\%$ (\texttt{gemma4}). For these five
models, the calibration criterion $\sigma/\Tstar < 5\%$ of
Proposition~\ref{prop:calibration} holds with substantial margin: the
range $[k_1 \sigma_M, k_2 \Tstar_M] = [3 \sigma_M, 0.5 \Tstar_M]$
is non-empty for every measured model.

The sixth model, \texttt{gpt-oss:20b}, the largest in our sample at
$20$B parameters, does not drift within the experimental window:
final $\Dhat = 0.071 \pm 0.005$, well below $\theta = 0.20$.
We discuss this case explicitly in \S\ref{sec:discussion}; for the
purposes of T8.1's empirical verification it is sufficient to note
that the calibration criterion is reported per-model rather than as
a global property, and that the deployment recipe of
Proposition~\ref{prop:calibration} requires a measured $\Tstar$
before $\Delta T$ can be set.

\paragraph{Family-wise consistency.}
The Mistral family entry in our sample (\texttt{mistral:7b},
$\Tstar = 157 \pm 0$) coincides with the Mistral-family $\Tstar$
reported in \citep{fernandez2026applied} for a larger
\texttt{mistral-small-3.1} model under the same protocol
(\textit{T*} $= 156 \pm 1$). Two distinct sizes within the same
family yield $\Tstar$ within one step, supporting the architectural
rather than scale-driven character of the threshold.

\paragraph{Refutation of size-monotonicity.}
The largest model in the sample is the only one that does not drift.
This refutes any monotone hypothesis under which larger models would
be more (or less) susceptible to compliant drift than smaller ones.
The empirical implication is that $\Tstar$ must be \emph{measured}
per deployment rather than \emph{predicted} from architectural
parameters.

\section{Experiment D: Temperature Insensitivity}
\label{sec:exp_d}

\paragraph{Precondition for APB construction.}
An APB is constructible only after a governance event has been triggered
--- and a governance event requires that $\Tstar$ be stable and
measurable for the deployed model.  If $\Tstar$ drifted substantially
with the LLM's sampling temperature, the threshold would need to be
recalibrated per deployment configuration, undermining the
Proposition~5.1 calibration recipe (\S\ref{sec:prop_calibration}).
Experiment~D validates this precondition: $\Tstar$ must be determined
by the escalation protocol embedded in the system prompt, not by
sampling stochasticity.  A model whose $\Tstar$ is temperature-stable
can have its APB trigger calibrated once and deployed confidently
across configurations.

\paragraph{Goal.} Test the conjecture that $\Tstar$ is determined by
the escalation protocol embedded in the system prompt rather than by
LLM sampling stochasticity. Empirical translation: across a sweep
of sampling temperatures, the per-model spread of mean $\Tstar$
should be small.

\paragraph{Setup.} Three models from \S\ref{sec:exp_c} are selected
for diversity in baseline $\Tstar$: \texttt{mistral:7b} ($\Tstar = 157$,
``fast cluster''), \texttt{llama3.2:3b} ($\Tstar = 151$,
``fastest''), and \texttt{gemma4:latest} ($\Tstar = 264$,
``slowest drifter''). Each is run at temperatures
$T \in \{0.2, 0.4, 0.6, 0.8\}$, three runs per cell, $500$ drift
steps, identical burn-in and $\theta$ to \S\ref{sec:exp_c}. The
insensitivity criterion is that the per-model range
$\max_T \Tstar(T) - \min_T \Tstar(T)$ across the four temperatures is
at most $10$ steps.

\paragraph{Result.} Table~\ref{tab:exp_d_temperature} reports the
per-cell $\Tstar$ across the four temperatures and the resulting
per-model range. The result is more nuanced than a single
insensitivity claim: two of three models satisfy the criterion with
substantial margin, and the third violates it.

\begin{table}[t]
\centering
\caption{Temperature insensitivity of $T^*$. Three models were each run at
four sampling temperatures, three runs per cell, 500 drift steps.
Cell entries are mean $\pm$ std of $T^*$ across runs at that temperature;
the rightmost column is the per-model range $\max_T T^*(T) - \min_T T^*(T)$
of the cell means. Insensitivity criterion (range $\le 10$ steps): VIOLATED.}
\label{tab:exp_d_temperature}
\begin{tabular}{lccccc}
\toprule
Model & $T=0.2$ & $T=0.4$ & $T=0.6$ & $T=0.8$ & range \\
\midrule
  gemma4:latest & 239$\pm$6.9 & 255$\pm$7.3 & 249$\pm$3.9 & 290$\pm$18.4 & 51 \\
  llama3.2:3b & 155$\pm$0.9 & 155$\pm$1.6 & 156$\pm$1.2 & 156$\pm$2.5 & 2 \\
  mistral:7b & 154$\pm$2.6 & 157$\pm$2.2 & 156$\pm$1.9 & 152$\pm$1.6 & 5 \\
\bottomrule
\end{tabular}
\end{table}

\paragraph{Reading.}
For the two fast-drifting models, \texttt{llama3.2:3b} and
\texttt{mistral:7b}, $\Tstar$ is strongly temperature-insensitive:
the range across $T \in \{0.2, 0.4, 0.6, 0.8\}$ is $2$ and $5$ steps
respectively, and the within-cell standard deviations are below
$3$ steps for every $(model, T)$ pair. For these models the conjecture
of \citep{fernandez2026applied} --- that $\Tstar$ is determined by the
escalation protocol embedded in the system prompt rather than by
sampling stochasticity --- replicates with substantial margin and
generalises across families.

For \texttt{gemma4}, the slow-drifting model in our sample, the
picture is different. The per-cell mean at $T = 0.8$ rises to
$290 \pm 18.4$ steps, against $239$--$255$ at the lower temperatures,
giving a per-model range of $51$ steps and a within-cell standard
deviation that is roughly three times larger than at $T \le 0.6$.
The insensitivity criterion ($\text{range} \le 10$) is violated.

\paragraph{Interpretation.}
We read this as evidence for a \emph{drift-floor} effect: when a
model's $\Tstar$ is close to the lower bound that the escalation
protocol can produce, sampling temperature has little room to move
the threshold. When $\Tstar$ is well above that floor (as in
\texttt{gemma4}, whose $\Tstar$ is $1.7\times$ larger than the
fast-cluster), the additional steps before threshold crossing
admit cumulative sampling fluctuations that compound at higher
temperatures. The relevant quantity is therefore not whether
temperature insensitivity holds globally but whether a model's
baseline $\Tstar$ sits in the deterministic regime; this is itself
a measurable property of the deployment and feeds back into the
calibration recipe of Proposition~\ref{prop:calibration} (a model
in the non-deterministic regime requires a wider $k_1 \sigma_M$
margin to absorb temperature-induced spread).

\paragraph{Implication for governance.}
Operators deploying models in the slow-drifting regime should not
assume that a single-temperature characterisation of $\Tstar$
generalises to other temperatures of the same model. Either the
deployment temperature is fixed (and $\Tstar$ measured at that
temperature suffices), or the calibration is performed at the
operationally relevant temperature(s). For fast-drifting models,
a single-temperature characterisation appears to be sufficient.

\section{Experiment E: Multi-Principal Threshold Governance}
\label{sec:exp_e}

\paragraph{Goal.} Empirically validate Proposition~\ref{prop:byzantine}
(Byzantine Principal Resistance): under $k$-of-$n$ threshold signing,
governance capture requires the simultaneous compromise of at least
$k$ distinct principals.

\paragraph{Setup.} Three principals are registered: Alice ($A$),
Bob ($B$), and Carol ($C$). We generate $N = 1{,}000$ governance
events per scenario from fresh $(\Es, \Dh)$ pairs (unique
\texttt{event\_id} per event). Seven scenarios test the full
threat surface of the $k$-of-$n$ construction:

\begin{center}
\small
\begin{tabular}{clll}
\toprule
ID & Configuration & Threshold & Expected result \\
\midrule
S1 & Alice only ($k=1$, $n=1$)         & 1 & \texttt{VALID} \\
S2 & Alice + Bob ($k=2$, $n=2$)        & 2 & \texttt{VALID} \\
S3 & Alice + Bob + Carol ($k=3$, $n=3$)& 3 & \texttt{VALID} \\
S4 & Alice only, threshold=2           & 2 & \texttt{INSUFFICIENT\_SIGNATURES} \\
S5 & Alice + Bob, threshold=3          & 3 & \texttt{INSUFFICIENT\_SIGNATURES} \\
S6 & Alice (legit) + forged Bob sig    & 2 & \texttt{INVALID\_SIGNATURE} \\
S7 & Alice counted twice (duplicate)   & 2 & \texttt{DUPLICATE\_SIGNER} \\
\bottomrule
\end{tabular}
\end{center}

S4--S5 simulate under-threshold submission (attacker holds $j < k$
keys). S6 simulates a capture attempt: attacker controls Alice's key
and forges Bob's signature with a random key not in the registry.
S7 simulates key reuse (attempting to count the same key twice to
satisfy the threshold).

\paragraph{Results.} Table~\ref{tab:exp_e} summarises the outcome.

\begin{table}[h]
\centering
\caption{Experiment E results. All $7 \times 1{,}000 = 7{,}000$ events
produced the expected outcome; no scenario deviated from its
expected result.}
\label{tab:exp_e}
\small
\begin{tabular}{clccr}
\toprule
ID & Scenario & Expected & Events & Rate \\
\midrule
S1 & $k=1$, 1 legit signer              & VALID                  & 1\,000 & 100.00\% \\
S2 & $k=2$, 2 legit signers             & VALID                  & 1\,000 & 100.00\% \\
S3 & $k=3$, 3 legit signers             & VALID                  & 1\,000 & 100.00\% \\
S4 & $k=2$, 1 signer only               & INSUFF. SIGNATURES     & 1\,000 & 100.00\% \\
S5 & $k=3$, 2 signers only              & INSUFF. SIGNATURES     & 1\,000 & 100.00\% \\
S6 & $k=2$, 1 legit + 1 forged sig      & INVALID SIGNATURE      & 1\,000 & 100.00\% \\
S7 & $k=2$, same key counted twice      & DUPLICATE SIGNER       & 1\,000 & 100.00\% \\
\bottomrule
\end{tabular}
\end{table}

\paragraph{Byzantine resistance.}
S6 and S7 are the critical Byzantine scenarios: in S6, an attacker
who controls Alice's key cannot obtain a verifying $\APB^{(2)}$
without a second legitimate key --- forging Bob's signature fails
at V1 (EUF-CMA of ed25519). In S7, attempting to count Alice's
key twice is caught by the duplicate-signer check before the
threshold is reached. Both scenarios produce $0$ false acceptances
across $1{,}000$ events, confirming Proposition~\ref{prop:byzantine}:
single-key capture is infeasible under $k \geq 2$.

\paragraph{Threshold monotonicity.}
Scenarios S1--S3 demonstrate clean threshold monotonicity:
governance events with $j$ valid signatures succeed at threshold
$k \leq j$ and fail at $k > j$. The verifier correctly
distinguishes ``insufficient signatures'' (S4--S5) from ``invalid
signature'' (S6), enabling an operator to diagnose whether a
rejection is due to under-provisioning or an active attack.

\section{Experiment F: Infrastructure Fault Injection}
\label{sec:exp_f}

\paragraph{Goal.}
Validate robustness of the APB verification protocol against five
representative infrastructure fault classes that arise in realistic
deployments: clock drift (F1), registry key mismatch (F2), concurrent
writes (F3), log tampering (F4), and duplicate APB submission (F5).
Each class targets a distinct predicate from the V1--V5 verification
suite (\S\ref{sec:impl_verifier}) or the tamper-evident log
(\S\ref{sec:impl_log}).

\paragraph{Setup.}
We use a 3-principal registry (Alice, Bob, Carol) with ed25519
key pairs.  Each fault class runs $N = 1{,}000$ events except
F4, which runs $R = 100$ repetitions per tamper strategy with
$N = 10$ log entries each (500 tampering trials in total).
The V4 acceptance window is $T^* = 300\,\mathrm{s}$.
All experiments are fully deterministic given the PRNG seed;
results are reproducible via \texttt{experiments/exp\_fault\_injection.py}.

\subsection*{F1 — Clock Drift (V4 boundary precision)}

We inject artificial clock offsets $\delta \in
\{-600, -310, -301, -299, -150, -1, 0, +1, +150, +299, +301, +310, +600\}$
seconds into the evidence timestamp $t_e$, holding the verifier
clock fixed at \emph{now}.  The V4 predicate accepts iff
$|\delta| \leq T^*$.

\begin{table}[h]
\centering
\caption{F1 Clock-drift results. Acceptance is exactly $|\delta| \leq 300\,\mathrm{s}$;
all 1\,000 events per offset behave uniformly.}
\label{tab:exp_f1}
\small
\begin{tabular}{rllr}
\toprule
$\delta$ (s) & Expected & Accepted / 1\,000 & Correct \\
\midrule
$-600$ & REJECT & 0     & \checkmark \\
$-310$ & REJECT & 0     & \checkmark \\
$-301$ & REJECT & 0     & \checkmark \\
$-299$ & ACCEPT & 1\,000 & \checkmark \\
$-150$ & ACCEPT & 1\,000 & \checkmark \\
$\phantom{+}0$   & ACCEPT & 1\,000 & \checkmark \\
$+150$ & ACCEPT & 1\,000 & \checkmark \\
$+299$ & ACCEPT & 1\,000 & \checkmark \\
$+301$ & REJECT & 0     & \checkmark \\
$+310$ & REJECT & 0     & \checkmark \\
$+600$ & REJECT & 0     & \checkmark \\
\bottomrule
\end{tabular}
\end{table}

All 13 offset points are classified correctly (Table~\ref{tab:exp_f1}).
The boundary is sharp at $\pm 1\,\mathrm{s}$ relative to $T^*$: offset
$-299\,\mathrm{s}$ is accepted and $-301\,\mathrm{s}$ is rejected in
$100\%$ of trials, confirming deterministic V4 enforcement.

\subsection*{F2 — Registry Key Mismatch (V1 under key corruption)}

We register Alice with a \emph{wrong} public key (a freshly generated
key unrelated to her signing key) and submit $1{,}000$ legitimately
signed APBs.  The verifier must detect the mismatch via V1 and return
\texttt{INVALID\_SIGNATURE} rather than \texttt{VALID} or
\texttt{PRINCIPAL\_NOT\_FOUND}.

\smallskip
\noindent\textbf{Result:} $0$ false acceptances; $1{,}000/1{,}000$
events return \texttt{INVALID\_SIGNATURE} (false-acceptance rate $= 0\%$).
This models supply-chain key substitution or registry database corruption:
even if the attacker's substituted key is plausible, a signature produced
under the true private key fails V1 unconditionally due to EUF-CMA
security of ed25519 (\S\ref{sec:t82}).

\subsection*{F3 — Concurrent Writes (V5 thread safety)}

We spawn $16$ threads, each submitting $50$ APBs (with distinct
\texttt{event\_id} UUIDs) to a shared \texttt{APBLog} and a shared
\texttt{seen\_event\_ids} nonce store protected by a \texttt{threading.Lock}.
Total: $800$ events.

\smallskip
\noindent\textbf{Result:} All $800$ events are accepted exactly once;
$0$ duplicate acceptances; $0$ thread errors; log integrity check passes
($800/800$ entries verified).  The locking discipline serialises
V5 nonce-store updates, preventing time-of-check/time-of-use (TOCTOU)
races that could allow a duplicate event\_id to be accepted by two
threads concurrently.

\subsection*{F4 — Log Tampering (HMAC chain integrity)}

We apply five tamper strategies to the HMAC-chained JSONL log across
$R = 100$ repetitions each (Table~\ref{tab:exp_f4}).

\begin{table}[h]
\centering
\caption{F4 Log-tampering detection rates ($R = 100$ per strategy,
$N = 10$ entries per log, $500$ total tamper trials).
All five strategies are detected in $100\%$ of trials.}
\label{tab:exp_f4}
\small
\begin{tabular}{llr}
\toprule
Strategy & Description & Detection rate \\
\midrule
Field modification & Change \texttt{D\_hat} in a middle entry     & 100\% \\
Entry deletion     & Remove the middle entry from the chain       & 100\% \\
Entry reordering   & Swap the first two entries                   & 100\% \\
HMAC forgery       & Replace \texttt{entry\_hmac} with random hex & 100\% \\
prev-HMAC splice   & Forge entry 0 and update entry 1's \texttt{prev\_hmac} & 100\% \\
\bottomrule
\end{tabular}
\end{table}

The prev-HMAC splice (last row) is the most sophisticated attack:
the adversary forges the HMAC of entry 0 \emph{and} propagates the
forged value into entry 1's \texttt{prev\_hmac} to maintain apparent
chain continuity.  This is detected because \texttt{entry\_hmac}
is computed as $\mathrm{HMAC\text{-}SHA256}(K, \texttt{seq} \|\,
\texttt{apb\_json} \| \texttt{prev\_hmac})$: an attacker who does
not know $K$ cannot produce a valid \texttt{entry\_hmac} for the
forged entry, so \texttt{verify\_integrity} rejects at entry 0
regardless of the splice.

\subsection*{F5 — Duplicate APB Submission (V5 necessity)}

We submit the \emph{same} APB $N = 1{,}000$ times, first without
the V5 nonce store (\texttt{seen\_event\_ids = None}), then with
it (\texttt{seen\_event\_ids} as a live set).

\begin{table}[h]
\centering
\caption{F5 Duplicate-submission results. Without V5 every replay is
accepted; with V5 only the first submission succeeds.}
\label{tab:exp_f5}
\small
\begin{tabular}{lccc}
\toprule
Configuration & Accepted & Duplicates caught & False acceptances \\
\midrule
Without V5 (V4 only) & 1\,000 & 0   & 999 \\
With V5              & 1      & 999 & 0   \\
\bottomrule
\end{tabular}
\end{table}

Without V5, every replay within the $300\,\mathrm{s}$ window is
accepted: $V4$ is a \emph{freshness} check, not a uniqueness check.
A legitimately signed APB can be replayed an arbitrary number of
times if the replays arrive within the acceptance window.
With V5 the \texttt{event\_id} UUID is recorded in the nonce store
on first acceptance; all $999$ subsequent submissions return
\texttt{DUPLICATE\_EVENT\_ID}.

\paragraph{Fault injection summary.}
Table~\ref{tab:exp_f_summary} aggregates the five fault classes.

\begin{table}[h]
\centering
\caption{Experiment F summary. All five fault classes are handled
correctly by the implementation; false-acceptance rate is $0\%$
for F2--F5, and V4 boundary precision is exact for F1.}
\label{tab:exp_f_summary}
\small
\begin{tabular}{llll}
\toprule
Fault & Targeted predicate & False-accept rate & Result \\
\midrule
F1 Clock drift         & V4 temporal freshness        & 0\% outside window & All 13 offsets correct \\
F2 Key mismatch        & V1 signature validity        & 0\%                & 100\% INVALID\_SIGNATURE \\
F3 Concurrent writes   & V5 + log thread safety       & 0\%                & 800/800 verified \\
F4 Log tampering       & HMAC chain integrity         & ---                & 100\% detected (5/5 strategies) \\
F5 Duplicate APB       & V5 semantic uniqueness       & 0\% (with V5)      & 999/999 caught \\
\bottomrule
\end{tabular}
\end{table}

The fault injection results demonstrate that each verification
predicate independently enforces its intended security property,
and that the predicates are \emph{compositionally sound}: no fault
class circumvents another predicate by exploiting the order of checks.

\section{Discussion}
\label{sec:discussion}

\paragraph{Models that do not drift.}
The case of \texttt{gpt-oss:20b} in \S\ref{sec:exp_c} merits explicit
treatment. Under our $500$-step protocol the model's drift estimator
remains well below threshold; we cannot, from this data alone,
distinguish three explanations:
(i) the model is architecturally robust to the specific escalation
protocol embedded in the system prompt and would not drift at any
reasonable horizon,
(ii) the model has a genuinely larger $\Tstar$ that exceeds our
window,
(iii) the model's tool-selection distribution under our LiveLLM
prompt is structurally different from the other five (e.g.\ a
strong default toward low-risk actions that does not respond to
context-pressure scaling).
For deployment, this distinction matters: under (i) the model would
not need an APB-emitting governance layer at all under this
protocol, while under (ii) or (iii) the operator would extend the
window or adapt the prompt. In either case, the implication for
governance design is the same: $\Pset$ and $G$ are configured
once; $\Delta T$ is configured per deployment from a measured
$\Tstar$, with the explicit understanding that $\Tstar$ may be
infinite or undefined for some models.

\paragraph{Boundaries of the result.}
The APB establishes accountability, not access control. A
compromised principal can sign a malicious decision and the APB
will record it. The mitigations are configurable around the APB,
not within it: $k$-of-$n$ multi-signature for sensitive decisions,
key rotation under $\RECAL$ governance events, and out-of-band
review of recorded APBs against expected decision patterns.
The proofs of T8.2--T8.4 rest on the security assumptions of
ed25519 over curve25519 and on the RFC~8785 canonicalization
properties. A break of those assumptions would invalidate the
cryptographic guarantees but not the structural claims (T8.1 in
particular is a partition argument that does not depend on the
signature scheme).
It is worth noting explicitly that T8.2 and T8.3 are reductions to
known cryptographic properties --- the existential unforgeability of
ed25519 under chosen-message attack (EUF-CMA) and the injectivity
of RFC~8785 canonicalization --- rather than novel mathematical
results. Their value lies in showing that these well-established
properties compose cleanly with the APB structure to produce the
non-repudiability and anonymous-re-authorization impossibility
guarantees needed by the governance layer.

\paragraph{Composability with access control.}
\label{sec:composability}
The APB protocol is deliberately orthogonal to access control: it
records \emph{that} a named principal made a decision, not
\emph{whether} that principal was entitled to do so. This separation
is a feature, not a limitation --- it allows the APB to be composed
with any access-control layer without redesigning either.

Three composable extensions are worth noting for deployments that
require stronger guarantees:

\begin{description}[leftmargin=2em, topsep=2pt, itemsep=3pt]
  \item[$k$-of-$n$ threshold signing.] For decisions of high
  consequence (e.g.\ $\RECAL$ events), operators can require
  that $k$ distinct members of $\Pset$ each produce a valid
  $\sigh$ over the same $(\Es, \Dh)$ before the governance
  layer accepts the APB. Under $k$-of-$n$, governance capture
  requires the simultaneous compromise of at least $k$
  principals; with $k = \lceil n/2 \rceil + 1$ this becomes a
  majority quorum with Byzantine tolerance.

  \item[Role-based authority scoping.] The $\mathrm{scope}$ field
  of $\Dh$ is a free-form string in the current implementation;
  a deployment MAY define a structured vocabulary and enforce
  that principals with a given RBAC role can only sign APBs
  with scopes from their allowed set. The verifier's V2/V3
  predicates can be extended to check role membership without
  changing the cryptographic core.

  \item[Capability tokens and policy engines.] For deployments
  already running a policy engine (OPA, Cedar, or similar),
  the APB's $\mathrm{eid}$ can serve as an immutable token
  whose issuance is gated by the policy layer. The APB then
  becomes the non-repudiable audit record of a token-issuance
  event rather than a standalone access decision.
\end{description}

These extensions are deferred to subsequent work (P9 onwards in
this series); the present paper establishes the primitive on which
they compose.

\paragraph{Integration paths.}
Two integration directions are natural extensions and deferred to
subsequent work. \emph{MCP/A2A integration}: the governance stack of
\citep{fernandez2026applied} operates on tool-call traces from a
LangGraph state graph; the Model Context Protocol and Agent-to-Agent
communication standards define interception points at which the
APB-emitting layer could be applied transparently to agent code
--- this is the focus of the next paper in this series.
\emph{Distributed APB log}: replacing the JSONL+HMAC chain with a
Merkle tree (and, for cross-organisational deployments, anchoring
to a public transparency log) gives stronger tamper evidence under
multi-writer scenarios.

\paragraph{Exploratory nature of Experiment D.}
The temperature-sweep results of \S\ref{sec:exp_d} should be read
as \emph{exploratory observations}, not as statistically validated
claims. With three runs per temperature-model cell, the results
are insufficient to estimate within-cell variance reliably; they
motivate hypotheses about the drift-floor effect and the
temperature-$\Tstar$ interaction, but a confirmatory study would
require at least $N \geq 10$ runs per cell with power analysis.
We move these results to the ``Observations'' register rather
than the ``Validation'' register of the paper's contribution.

\paragraph{Choice of decision-policy thresholds.}
Experiment~A's policy comparison shows that Protocol-Bounded
Governance Completeness is invariant under threshold choice. We
do not, in this paper, prescribe a particular $(\theta_d, \theta_r)$.
Their setting is a deployment decision tied to the operator's
tolerance for halts that escalate to $\DENY$ versus $\RESUME$, and
to the empirical distribution of $\Dhat$ at the moments of
governance event in the operator's workload.

\section{Conclusion}
\label{sec:conclusion}

We have introduced the Accountability Proof Block as the formal
mechanism by which an LLM agent's persistent halts resolve into
identity-bound, cryptographically verifiable governance decisions.
The construction is supported by four theorems ---
protocol-bounded governance completeness, non-repudiability,
impossibility of anonymous re-authorisation, and finite-time
evidence construction --- a Byzantine principal resistance
proposition for $k$-of-$n$ threshold signing, and a real-cryptography
reference implementation using the RFC~8785 JSON Canonicalization
Scheme and a five-predicate verifier that provides both temporal
freshness and semantic uniqueness guarantees. Empirical validation
across $3{,}812$ halt events, $1{,}800$ adversarial attempts,
$7{,}000$ multi-principal threshold events, $18$ cross-model runs,
and $36$ temperature-sweep runs shows that the
construction behaves as the theory predicts and that the model-specific
drift threshold $\Tstar$ is stable, measurable, and architecture-driven
rather than scale-driven, with the empirical caveat that
temperature-insensitivity itself is conditional on a model sitting
in a fast-drifting regime (a result we flag as exploratory pending
larger-sample confirmation). The result is a small, configurable,
provably-bounded mechanism by which a system that has reached the
limit of its own authority can hand the resolution back to a
named, accountable human --- without trusting the system itself
to make that handover honestly --- and that composes cleanly with
threshold signing, RBAC, and policy engines for deployments that
require stronger access-control guarantees.

The contribution of P8 is not that humans remain in the loop.
It is that when autonomous authority ends, the system can
\emph{prove} who took control, under what evidence, and under
which cryptographic guarantees.

\bibliographystyle{plainnat}
\bibliography{references}

\end{document}